\newcommand{\p}{\partial}
\newcommand{\ord}{\mathop{\rm ord}\nolimits}
\newcommand{\todo}[1][\null]{\ensuremath{\clubsuit}}
\newcommand{\noprint}[1]{}
\newtheorem{theorem}{Theorem}
\newtheorem{lemma}{Lemma}
\newtheorem{corollary}{Corollary}
\newtheorem{proposition}{Proposition}
{\theoremstyle{definition} \newtheorem{definition}{Definition}
\newtheorem{example}{Example}

\newtheorem*{remark*}{Remark}
}
\begin{document}

\par\noindent {\LARGE\bf
Generalized conditional symmetries\\ of evolution equations\par}

{\vspace{4mm}\par\noindent  
Michael KUNZINGER~$^\dag$ and Roman O. POPOVYCH~$^\ddag$
\par\vspace{2mm}\par}

{\vspace{2mm}\par\it 
\noindent $^{\dag,\ddag}$Fakult\"at f\"ur Mathematik, Universit\"at Wien, Nordbergstra{\ss}e 15, A-1090 Wien, Austria
\par}

{\vspace{2mm}\par\noindent \it
$^{\ddag}$~Institute of Mathematics of NAS of Ukraine, 3 Tereshchenkivska Str., Kyiv-4, Ukraine
 \par}

{\vspace{2mm}\par\noindent E-mail: \it  $^\dag$michael.kunzinger@univie.ac.at, $^\ddag$rop@imath.kiev.ua
 \par}



{\vspace{5mm}\par\noindent\hspace*{5mm}\parbox{150mm}{\small
We analyze the relationship of generalized conditional symmetries of evolution equations to the formal compatibility and passivity of systems of differential equations 
as well as to systems of vector fields in involution. 
Earlier results on the connection between generalized conditional invariance and generalized reduction of evolution equations are revisited. 
This leads to a no-go theorem on determining equations for operators of generalized conditional symmetry. 
It is also shown that up to certain equivalences there exists a one-to-one correspondence between generalized conditional symmetries of an evolution equation 
and parametric families of its solutions.
}\par\vspace{3mm}}


\section{Introduction}

Generalized conditional symmetries provide an effective method for finding exact solutions of evolution equations. 
Similarly to other such methods~\cite{Olver&Rosenau1986}, 
it can be viewed as an instance of the general method of differential constraints~\cite{Sidorov&Shapeev&Yanenko1984,Yanenko1964} 
(or ``side conditions''~\cite{Olver&Rosenau1986}). 
Within the framework of empiric compatibility theory, generalized conditional symmetries as differential constraints compatible with an initial equation 
were investigated by Olver~\cite{Olver1994} in order to justify the method on ``nonlinear separation'' of variables by Galaktionov~\cite{Galaktionov1990}. 
Another interpretation of generalized conditional symmetries of an evolution equation is to consider them as 
invariant manifolds of this equation, i.e., manifolds in appropriate jet spaces 
that are invariant under the flow generated by the equation. 
This is the terminology in which generalized conditional symmetries of systems of evolution equations 
were first studied by Kaptsov~\cite{Andreev&Kaptsov&Pukhnachov&Rodionov1998,Kaptsov1992} 
although the importance of invariant manifolds of evolution equations was understood much earlier~\cite{Lax1975}. 

From the symmetry point of view, the notion of generalized conditional symmetry arises by merging 
the notions of generalized and conditional symmetries, cf.\ Section~\ref{SectionOnDifferentFormsOfCriterionOfCondInv}. 
The idea of significantly extending Lie symmetries of differential equations by including derivatives of the  
relevant dependent variables in the coefficients of the associated infinitesimal generators first appeared 
in the fundamental paper of Noether~\cite{Noether1918&1971} in connection with her study of conservation laws.
Symmetries of this kind are called, e.g.,  
generalized \cite{Olver1993}, 
Lie--B\"acklund \cite{Bluman&Kumei1989,Ibragimov1985} or 
higher-order \cite{Bluman&Cheviakov&Anco2010} symmetries in the literature. 
See additionally the excellent sketch on the history of generalized symmetries and relevant terminology in \cite[p. 374--377]{Olver1993}.
The concept of conditional symmetries arose much later. 
Its origin can be traced back to the thesis of Bluman \cite{Bluman1967} and the paper by Bluman and Cole \cite{Bluman&Cole1969}, 
where it was presented in terms of ``nonclassical groups'' or the ``nonclassical'' method of finding similarity solutions, respectively, 
cf.\ the detailed discussion in \cite[Section 5.2.2]{Bluman&Cheviakov&Anco2010}. 
A version of the corresponding invariance criterion explicitly taking into account the differential consequences 
involved in the process was first proposed by Fushchych and Tsyfra in~\cite{Fushchych&Tsyfra1987}. 
Combining results of~\cite{Fushchych1987b,Fushchych&Tsyfra1987} and other previous papers,
in~\cite{Fushchych1987a} Fushchych introduced the general concept of conditional invariance. 
Around this time the terms ``conditional invariance'' and ``$Q$-conditional invariance'' began to be used regularly 
in connection with the method of Bluman and Cole and soon evolved into the terms 
$Q$-conditional \cite{Fushchych&Shtelen&Serov1989} or, simply, conditional \cite{Fushchych&Zhdanov1992} 
and nonclassical \cite{Levi&Winternitz1989} symmetry. 
The notions of generalized and conditional symmetries were merged, within the framework of symmetry analysis of differential equations, 
by Fokas and Liu~\cite{Fokas&Liu1994} in the special case when evolution equations and symmetries do not explicitly involve the time variable 
and by Zhdanov \cite{Zhdanov1995} in the general case. 

The variety of possible interpretations and related notions and a number of different names for the parent notions of 
conditional and generalized symmetries leads to the diversity of names used for generalized conditional symmetry in the literature. 
We have already mentioned the terms ``invariant manifold'' \cite{Andreev&Kaptsov&Pukhnachov&Rodionov1998,Bagderina2009,Kaptsov1992} 
(resp.\ ``invariant set'' \cite{Galaktionov2001,Galaktionov&Posashkov&Svirshchevskii1995})
and ``compatible differential constraint''~\cite{Olver1994}. 
Additionally, combining names of the parent notions of symmetries leads, in particular, to the terms 
``conditional Lie-B\"acklund symmetry'' \cite{Ji2010,Ji&Qu2007,Zhdanov1995} and  
``higher (or higher order) conditional symmetry'' \cite{Basarab-Horwath&Zhdanov2001,Zhdanov2000}. 
Sometimes special names are used for particular cases of generalized conditional symmetries.
For example, linear compatible differential constraints for diffusion--reaction equations were called ``additional generating conditions'' 
in~\cite{Cherniha1997}.
For uniformity, we will use the term ``generalized conditional symmetry'' \cite{Fokas&Liu1994,Qu1996,Qu1999} throughout the paper. 
This will additionally emphasize the relation of this notion to symmetry analysis 
although the nature even of usual conditional symmetries is in fact closer to compatibility theory, cf.~\cite{Kunzinger&Popovych2009a}. 

The main purpose of this paper is to investigate basic problems concerning 
generalized conditional symmetry of $(1+1)$-dimensional evolution equations of the general form 
\begin{equation}\label{EqGenEvolEq}
u_t=H(t,x,u_{(r,x)}),
\end{equation}
where $r\geqslant1$, $u_t=\p u/\p t$, $u_0:=u$, $u_k=\p^ku/\p x^k$, $u_{(r,x)}=(u_0,u_1,\dots,u_r)$ and $H_{u_r}\ne0$.
Among these problems are the comparative analysis of different versions of the conditional invariance criterion, 
the study of the possibility of solving the corresponding determining equations as well as 
relating generalized conditional symmetries to 
the concept of reduction, 
multiparametric families of solutions and 
different notions of compatibility for overdetermined systems of partial differential equations. 
Most results of the paper can be extended to systems of $(1+1)$-dimensional evolution equations if 
certain restrictions for generalized conditional symmetries are imposed, cf.~\cite{Andreev&Kaptsov&Pukhnachov&Rodionov1998}. 
We restrict our consideration to single evolution equations for the sake of clarity of presentation. 

Throughout the paper we denote by~$\mathcal E$ a fixed equation of the form~\eqref{EqGenEvolEq}. 
The indices $a$ and $b$ run from 1 to $\rho$,  
and we use the summation convention for repeated indices.
Bar over a letter denotes a tuple of $\rho$ consecutive values.
Subscripts of functions denote differentiation with respect to the corresponding variables, 
$\p_t=\p/\p t$, $\p_x=\p/\p x$, $\p_u=\p/\p u$ and $u_{tk}=\p^{k+1}u/\p t\p x^k$. 
We also will use another notation for derivatives: 
$u_\alpha=u_{\alpha_0,\alpha_1}=\p^{|\alpha|}u/\p t^{\alpha_0}\p x^{\alpha_1}$, 
where $\alpha=(\alpha_0,\alpha_1)$ is a multiindex, $\alpha_0,\alpha_1\in\mathbb N\cup\{0\}$ and $|\alpha|=\alpha_0+\alpha_1$, 
so that $u_k=u_{0,k}$ and $u_{tk}=u_{1,k}$.
Any function is considered as its zero-order derivative.
$D_t=\p_t+u_{\alpha_0+1,\alpha_1}\p_{u_\alpha}$ and
$D_x=\p_x+u_{\alpha_0,\alpha_1+1}\p_{u_\alpha}$ are the
operators of total differentiation with respect to the variables~$t$ and~$x$, respectively.
All our considerations are carried out in the local setting. 

In the next section we discuss prerequisites for introducing the notion of generalized conditional symmetries in symmetry analysis 
and present different versions of the corresponding invariance criterion for single evolution equations. 
Relations of generalized conditional symmetries to formal compatibility and passivity of certain overdetermined systems of 
partial differential equations as well as to involutivity of certain systems of vector fields are established in 
Sections~\ref{SectionOnFormalCompAndCondSym}, \ref{SectionOnPassivityAndCondSym} and~\ref{SectionOnRelationToInvolutionOfVectorFields}, respectively. 
For this purpose we employ a weight of derivatives instead of the usual order (Section~\ref{SectionOnFormalCompAndCondSym}) 
and a ranking of derivatives (Section~\ref{SectionOnPassivityAndCondSym}), which are associated with evolution equations of a fixed order. 
Reductions of evolution equations with special ansatzes are studied in Section~\ref{SectionOnReductionAndConditionalSymmetry}. 
The Zhdanov theorem \cite{Zhdanov1995,Zhdanov2000} (see also \cite{Basarab-Horwath&Zhdanov2001}) on the connection of
generalized conditional symmetries of an evolution equation with ansatzes of a special form reducing this equation is also revisited.
This leads to new results on the correspondence between generalized conditional symmetries, 
ansatzes and parametric families of solutions of evolution equations.  
In Section~\ref{SectionNoGoTheoremOnDetEqs} we prove a no-go theorem on determining equations for generalized conditional symmetries of evolution equations. 
Roughly speaking, it is shown that solving the determining equations is equivalent to solving the original equations.
An interpretation of usual conditional symmetries of evolution equations 
as special generalized conditional symmetries is given in Section~\ref{SectionOnUsualAndGenRedOps} and is
then illustrated by a new nontrivial example. 
Properties of generalized conditional symmetries of evolution equations are summed up in the conclusion.

\section{Different forms of the criterion of conditional invariance}\label{SectionOnDifferentFormsOfCriterionOfCondInv}

The criterion of generalized conditional invariance of evolution equations arises as a natural extension of both 
the criterion of generalized invariance and the criterion of conditional invariance. 
This is why we at first analyze the latter criteria in the case of evolution equations. 

By the conventional definition, an equation~$\mathcal E$ of the form~\eqref{EqGenEvolEq} is \emph{conditionally invariant} 
with respect to the vector field $Q=\tau\p_t+\xi\p_x+\eta\p_u$, 
where the coefficients $\tau$, $\xi$ and $\eta$ are functions of $t$, $x$ and~$u$, 
if the relation $Q_{(r)}E\bigl|_{\mathcal E_r\cap\mathcal Q_r}=0$ holds.
Here $E:=u_t-H$ and the symbol $Q_{(r)}$ stands for the standard $r$th prolongation
of the operator~$Q$ \cite{Olver1993,Ovsiannikov1982}:
\begin{equation}\label{EqProlongedOp}
Q_{(r)}=Q+\sum_{0<|\alpha|\leqslant  r}\left(D_t^{\alpha_0}D_x^{\alpha_1}Q[u]+\tau u_{\alpha_0+1,\alpha_1}+\xi u_{\alpha_0,\alpha_1+1}\right)\p_{u_\alpha},
\end{equation}
where $Q[u]=\eta-\tau u_t-\xi u_x$ is the characteristic of the vector field $Q$, 
and $\mathcal Q_r$ denotes the manifold defined by the set of all the differential
consequences of the characteristic equation~$\mathcal Q$: $Q[u]=0$ in the $r$th-order jet space $J^r$, i.e.,
\[
\mathcal Q_r=\{ (t,x,u_{(r)}) \in J^r\mid D_t^{\alpha_0}D_x^{\alpha_1}Q[u]=0, \ \alpha_0+\alpha_1<r \}.
\]
The manifold defined by the equation~$\mathcal E$ in~$J^r$ is denoted by~$\mathcal E_r$. 
In comparison with classical Lie symmetries, the weakening of the invariance condition consists in equating $Q_{(r)}E$ to zero 
on the submanifold $\mathcal E_r\cap\mathcal Q_r$ but not on the entire manifold~$\mathcal E_r$.
As $\mathcal E$ is an evolution equation, only differential consequences of~$\mathcal Q$ with respect to~$x$ are in fact essential 
when substituting into the expression $Q_{(r)}E$ (cf.\ the proof of Proposition~\ref{PropositionOnUsualAndGenRedOps}).
Hence the conditional invariance criterion can be rewritten in the form $Q_{(r)}E\bigl|_{\mathcal E_r\cap\mathcal Q_{(r,x)}}=0$, 
where 
\[
\mathcal Q_{(r,x)}=\{ (t,x,u_{(r)}) \in J^r\mid D_x^kQ[u]=0, \ k=0,\dots,r-1 \},
\]
and the bound~$r$ for orders of the occurring differential consequences of the equations~$\mathcal E$ and~$\mathcal Q$ is not essential. 

Two vector fields $\widetilde Q$ and $Q$ are called equivalent if they differ by a multiplier
which is a nonvanishing function of~$x$ and~$u$: $\widetilde Q=\lambda Q$, where $\lambda=\lambda(x,u)$, $\lambda\not=0$.
The property of conditional invariance matches nicely with this equivalence relation.
Namely, if the equation~$\mathcal E$ is conditionally invariant with respect to the vector field~$Q$
then it is conditionally invariant with respect to any operator which is equivalent to~$Q$.
Therefore the equivalence relation of vector fields has a well-defined restriction to the set of conditional symmetries of the equation~$\mathcal E$.

In the case of generalized symmetries, the extension of the notion of Lie symmetries 
is to permit the dependence of coefficients of vector fields on derivatives of~$u$ \cite{Olver1993}. 
A generalized vector field~$Q$ is a symmetry of~$\mathcal E$ if and only if the associated evolutionary vector field $Q[u]\p_u$ is. 
Hence it is sufficient to consider only evolutionary vector fields as generalized infinitesimal symmetries. 
Additionally, if an evolutionary vector field $Q=\eta\p_u$ is a symmetry of~$\mathcal E$ 
and the difference $\tilde\eta-\eta$ vanishes on solutions of~$\mathcal E$ 
then the vector field~$\tilde Q=\tilde\eta\p_u$ also is a symmetry of~$\mathcal E$. 
Such generalized symmetries are called equivalent. 
In view of the evolution form of~$\mathcal E$ this means that we need to consider only generalized symmetries 
whose characteristics do not depend on derivatives containing differentiation with respect to~$t$. 

Merging the above extensions of classical Lie symmetries leads to the notion of generalized conditional symmetries. 
Consider a generalized vector field~$Q=\eta\p_u$ with $\eta=Q[u]$ being a differential function, 
i.e., a smooth function of~$t$, $x$ and a finite number of derivatives of~$u$. 

\begin{definition}\label{DefinitionOfGenCondSym0}
An evolution equation~$\mathcal E$ is called \emph{conditionally invariant} with respect to the generalized vector field~$Q=\eta\p_u$ if
the condition \[Q_{(r)}E\bigl|_{\mathcal M}{}=0\] holds, 
where the $r$th prolongation $Q_{(r)}$ of~$Q$ is defined by~\eqref{EqProlongedOp} and 
$\mathcal M$ denotes the set of all differential consequences of the equation~$\mathcal E$ 
and differential consequences of the equation~$\eta=0$ with respect to~$x$.   
In this case, $Q$ is called an operator of \emph{generalized conditional symmetry} of the equation~$\mathcal E$ 
and the above condition is the \emph{criterion of conditional invariance}.
\end{definition}

As $Q_{(r)}E=D_t\eta-\sum_{k=0}^r H_{u_k}D_x^k\eta$ and the last sum identically vanishes in view of 
differential consequences of the equation~$\eta=0$ with respect to~$x$, 
we obtain at once another form of the criterion of conditional invariance~\cite{Zhdanov1995}: \[D_t\eta\bigl|_{\mathcal M}{}=0.\]
After calculating the orders of the occurring differential consequences, 
Definition~\ref{DefinitionOfGenCondSym0} can be equivalently reformulated with a precise determination of the underlying jet space.  
To this end, it suffices to consider the criterion $D_t\eta|_{\mathcal M}^{}=0$ 
within the jet space~$J^m$ of order $m=\max\{r(\alpha_0+1)+\alpha_1\mid \eta_{u_\alpha}\ne0\}$ 
which coincides with the weight of~$D_t\tilde\eta$ (cf.\ Section~\ref{SectionOnFormalCompAndCondSym}). 
Then the criterion takes the form $D_t\eta|_{\mathcal M_m}^{}=0$, 
where $\mathcal M_m$ is the manifold determined by~$\mathcal M$ in~$J^m$. 
All other similar conditions can be formalized in the same way. 

\noprint{
Denote by ${\rm GCS}(\mathcal E)$ the set of generalized conditional symmetries of the equation~$\mathcal E$. 
There are two well-defined equivalence relations on~${\rm GCS}(\mathcal E)$, which extend 
the above equivalence relations of conditional and generalized symmetries, respectively. 
}

There are two well-defined equivalence relations on the set of generalized conditional symmetries of the equation~$\mathcal E$, which extend 
the above equivalence relations of conditional and generalized symmetries, respectively. 

Suppose that $\tilde\eta=\lambda\eta$, where $\lambda$ is a nonvanishing differential function, i.e., 
$Q=\eta\p_u$ and $\tilde Q=\tilde\eta\p_u$ are equivalent generalized vector fields.
Then the vector field $Q=\eta\p_u$ is a generalized conditional symmetry of the equation~$\mathcal E$ 
if and only if the vector field $\tilde Q=\tilde\eta\p_u$ is.
Indeed, $D_t\tilde\eta=\lambda D_t\eta+\eta D_t\lambda$ vanishes assuming~$\mathcal M$ if and only if $D_t\eta$ does.
Moreover, $D_t\tilde\eta$ vanishes assuming~$\mathcal M$ if and only if it vanishes assuming~$\tilde{\mathcal M}$, 
where $\tilde{\mathcal M}$ denotes the set of all differential consequences of the equation~$\mathcal E$ 
and differential consequences of the equation~$\tilde\eta=0$ with respect to~$x$.   
This allows one to restrict the equivalence relation of generalized vector fields 
to the set of generalized conditional symmetries of the equation~$\mathcal E$ in a well-defined way, 
analogously to the case for usual conditional symmetries.   
Hence we will say that generalized conditional symmetries $Q=\eta\p_u$ and $Q=\tilde\eta\p_u$ of~$\mathcal E$ 
are \emph{equivalent as vector fields} if there exists a nonvanishing differential function~$\lambda$ such that $\tilde\eta=\lambda\eta$.

If differential functions~$\eta$ and~$\tilde\eta$ coincide on the manifold defined by differential consequences of~$\mathcal E$ 
in a jet space of suitable order, then in view of the Hadamard lemma we have a representation 
$\tilde\eta=\eta+\chi^\alpha D_t^{\alpha_0}D_x^{\alpha_1}E$, 
where the summation is over a finite set of~$\alpha$'s and the $\chi^\alpha$ are differential functions. 
Hence the condition $D_t\eta|_{\mathcal M}^{}=0$ is equivalent to the condition $D_t\tilde\eta|_{\mathcal M}^{}=0$ 
and, therefore, the condition $D_t\tilde\eta|_{\tilde{\mathcal M}}=0$. 
In other words, the  vector field $Q=\eta\p_u$ is a generalized conditional symmetry of~$\mathcal E$ if and only if the vector field~$\smash{\tilde Q=\tilde\eta\p_u}$ is.
For this reason we will call the generalized conditional symmetries $Q=\eta\p_u$ and $\tilde Q=\tilde\eta\p_u$ 
\emph{equivalent on solutions of~$\mathcal E$}.

In contrast to the equivalence of generalized conditional symmetries as vector fields, 
the equivalence on solutions does not agree with the reduction procedure. 
Some vector fields from a set of generalized conditional symmetries equivalent on solutions of~$\mathcal E$ 
cannot be used for reduction of~$\mathcal E$, while
some of them are appropriate for reduction but the corresponding reduction procedures differ in 
the number of invariant independent and dependent variables in the associated ansatzes and, therefore, 
the structure of the reduced systems, cf.\ Section~\ref{SectionOnUsualAndGenRedOps}.
 
We can merge the above two equivalence relations of generalized conditional symmetries into a single notion. 
Namely, generalized conditional symmetries $Q=\eta\p_u$ and $\tilde Q=\tilde\eta\p_u$ of~$\mathcal E$ are called \emph{equivalent} 
if there exists a nonvanishing differential function~$\lambda$ 
such that $\tilde\eta-\lambda\eta$ is equal to zero on solutions of~$\mathcal E$.

Taking into account the equivalence on solutions of the evolution equation~$\mathcal E$, 
we can restrict our considerations to generalized conditional symmetries of the \emph{reduced form}
$\hat Q=\hat\eta\p_u$, where the characteristic~$\hat\eta$ is a \emph{reduced differential function}, 
i.e., it depends on $t$, $x$ and derivatives of~$u$ with respect to only~$x$. 
Generalized conditional symmetries in reduced form are equivalent 
if and only if their characteristics differ in a nonvanishing multiplier being a reduced differential function. 
Up to this equivalence, we can replace $\hat Q$ by the corresponding \emph{canonical form} 
\begin{equation}\label{EqHigheeOrderOpInCanonicalEvolForm}
\check Q=(u_\rho-\check\eta(t,x,u_{(\rho-1,x)}))\p_u,
\end{equation}
where $\rho$ is the order of~$\hat\eta$ and  
the condition of maximal rank of~$\hat\eta$ with respect to~$u_\rho$ is additionally assumed to be satisfied. 
The function~$\check\eta=\check\eta(t,x,u_{(\rho-1,x)})$ is obtained by solving the equation $\hat\eta=0$ 
with respect to $u_\rho$. 

An evolution equation~$\mathcal E$ is conditionally invariant with respect to 
a generalized evolution vector field $Q=\eta(t,x,u_{(\rho,x)})\p_u$ in reduced form if 
\begin{equation}\label{EqCriterionOfCondInvWrtHigheeOrderOpInReducedEvolForm}
Q_{(r)}(u_t-H)\big|_{\mathcal E_{r+\rho}\cap\mathcal Q_{(r+\rho,x)}}=0,
\qquad\mbox{or}\qquad 
D_t\eta\big|_{\mathcal E_{r+\rho}\cap\mathcal Q_{(r+\rho,x)}}=0,
\end{equation}
where $Q_{(r)}$ is the $r$th prolongation of~$Q$ defined by~\eqref{EqProlongedOp}, 
$\mathcal E_{r+\rho}$ (resp.\ $\mathcal Q_{(r+\rho,x)}$) is the manifold determined in the $(r+\rho)$th-order jet space 
by differential consequences of the equation~$\mathcal E$ (resp.\ the equation $\eta=0$ only with respect to~$x$).
If $Q$ is in canonical form, i.e. $\eta=u_\rho-\check\eta(t,x,u_{(\rho-1,x)})$,
the criterion of conditional invariance of~$\mathcal E$ with respect to~$Q$ reads 
\begin{equation}\label{EqCriterionOfCondInvWrtHigheeOrderOpInCanonicalEvolForm}
D_x^\rho H=D_t\check\eta \quad\mbox{on}\quad \{ 
u_{\rho+k}=D_x^k\check\eta,\, k=0,\dots,r,\, 
u_{tl}=D_x^lH,\, l=0,\dots,\rho-1 \}.
\end{equation}
After making all necessary substitutions in~\eqref{EqCriterionOfCondInvWrtHigheeOrderOpInCanonicalEvolForm}, 
we obtain the single determining equation
\begin{equation}\label{EqForGenCondSymsOfEvolEqs}
\hat D_t\check\eta=\hat D_x^\rho\hat H 
\end{equation}
in $\check\eta$, where 
$\hat H=H(t,x,u_0,\dots, u_r)$ if $\rho>r$,
$\hat H=H(t,x,u_0,\dots, u_{\rho-1},\check\eta,\hat D_x\check\eta,\dots,\hat D_x^{r-\rho}\check\eta)$ if $\rho\leqslant r$, and 
\[
\hat D_t=\p_t+(\hat D_x^{b-1}\hat H)\p_{u_{b-1}}, \quad 
\hat D_x=\p_x+\sum_{b=1}^{\rho-1}u_b\p_{u_{b-1}}+\check\eta\p_{u_{\rho-1}}
\]
are the operators of total differentiation restricted to the manifold $\mathcal E_{r+\rho}\cap\mathcal Q_{(r+\rho,x)}$.
Equation~\eqref{EqForGenCondSymsOfEvolEqs} is a $(1+\rho)$-dimensional evolution equation in an unknown function~$\check\eta$ 
of the independent variables $t$, $x$, $u_0$, \dots, $u_{\rho-1}$, 
and we have no possibilities for splitting with respect to unconstrained variables. 

There also exist other forms and interpretations of the criterion of generalized conditional invariance of evolution equations in the literature. 
Suppose that the generalized evolution vector field $Q$ is in reduced form. 
On the manifold $\mathcal E_{r+\rho}$ we have $Q_{(r)}(u_t-H)=\eta_t+\eta_*H-H_*\eta$, 
where $f_*$ denotes the Fr\'echet derivative of a differential function~$f$ 
depending solely on $t$, $x$ and derivatives of~$u$ with respect to~$x$, 
\[f_*=\sum\limits_{i=0}^{\infty}f_{u_i}D_x^i.\]
Since the differential function $\eta_t+\eta_*H-H_*\eta$ does not involve derivatives with respect to~$t$ and mixed derivatives, 
we can rewrite~\eqref{EqCriterionOfCondInvWrtHigheeOrderOpInReducedEvolForm} in the form
\[
(\eta_t+\eta_*H-H_*\eta)\big|_{\mathcal Q_{(r+\rho,x)}}=0, 
\qquad\mbox{or}\qquad 
(\eta_t+\eta_*H)\big|_{\mathcal Q_{(r+\rho,x)}}=0.
\]
If $\eta_t=\eta_x=0$, $H_t=H_x=0$ and $\eta$ is of maximal rank with respect to~$u_\rho$, in view of the Hadamard lemma 
the last condition is equivalent to the condition $\eta_*H-H_*\eta=F[u,\eta]$ presented in Definition~1.1 of~\cite{Fokas&Liu1994}. 
Here $F[u,\eta]$ is a smooth function of derivatives of~$u$ with respect to~$x$ and total derivatives of~$\eta$ with respect to~$x$
such that $F[u,0]=0$.

Introducing the notation $\tilde D_t$ for the reduced operator of total differentiation with respect to~$t$ 
on the solution set of the equation~$\mathcal E$, \[\tilde D_t=\p_t+\sum_{k=0}^\infty(D_x^kH)\p_{u_k},\] 
we represent $\eta_t+\eta_*H$ as $\tilde D_t\eta$ and obtain as another form of the criterion of generalized conditional invariance 
of evolution equations 
\[
\tilde D_t\eta\big|_{\mathcal Q_{(r+\rho,x)}}=0,
\]
which can be interpreted as the condition of invariance of the equation $\eta=0$ 
with respect to the formal transformation group~\cite{Ibragimov1985} generated by the generalized vector field $\tilde D_t$. 
Since the vector field $\tilde D_t$ is associated with the equation~$\mathcal E$, 
the solution set of the equation $\eta=0$ is called an \emph{invariant set}, 
or, interpreted as a manifold in an appropriate jet space, an \emph{invariant manifold} of the equation~$\mathcal E$ 
\cite[Section~3.1]{Andreev&Kaptsov&Pukhnachov&Rodionov1998}. 
This interpretation is especially clear in the case $\eta_t=0$ and $H_t=0$.  
Then we can rewrite the criterion in the form \[(H\p_u)_{(\rho)}^{}\eta\big|_{\mathcal Q_{(r+\rho,x)}}=0,\] 
consider~$t$ as the group parameter of the formal transformation group corresponding to 
the generalized vector field $H\p_u$ in evolution form and 
interpret the equation~$\mathcal E$ as the equation for finding this group. 

\begin{remark*} 
Both symmetries and cosymmetries of an evolution equation are generalized conditional symmetries thereof
but they obviously do not exhaust the entire set of its generalized conditional symmetries. 
For example, countable sets of independent symmetries and conservation laws had been known for 
the Sawada--Kotera equation $u_t=u_5-30uu_3-30u_1u_2+180u^2u_1$.  
Recently a series of generalized conditional symmetries of this equation, which are neither symmetries nor cosymmetries, 
was explicitly constructed in~\cite{Bagderina2009}.  
\end{remark*}

\section{Formal compatibility and conditional symmetry}\label{SectionOnFormalCompAndCondSym}

The relations between usual conditional (nonclassical) symmetries, reduction and compatibility of the combined system consisting of 
the initial equation and the corresponding invariant surface equation were discovered in~\cite{Pucci&Saccomandi1992} 
and were also studied and extended to the generalized framework in~\cite{Olver1994}. 
In particular, it was shown that the conditional invariance criterion is the compatibility condition of the combined system.
This also was remarked, e.g., in~\cite{Fokas&Liu1994}.
At the same time, the rigorous formalization of this relation is nontrivial and was not considered so far 
even for evolution equations. 

In this section we use the definition of formal compatibility as presented, e.g.,  in \cite{Pommaret1994,Seiler1994,Seiler2010}. 
We temporarily employ notations compatible with these references, hence slightly different from the rest of the paper.  

Let $\mathcal L_k$ be a system of $l$~differential equations $\smash{L^1[u]=0}$, \dots, $\smash{L^l[u]=0}$ 
in $n$~independent variables $\smash{x=(x_1,\dots,x_n)}$ and
$m$~dependent variables $\smash{u=(u^1,\dots,u^m)}$, which involves derivatives of~$u$ up to order~$k$. 
The system~$\mathcal L_k$ is interpreted as a system of algebraic equations in the jet space~$J^k$ 
and defines a manifold in~$J^k$, which is also denoted by~$\mathcal L_k$. 
The $s$th-order prolongation $\mathcal L_{k+s}$ of the system~$\mathcal L_k$, $s\in\mathbb N$, is the system in~$J^{k+s}$ 
consisting of the equations  $\smash{D_1^{\alpha_1}\ldots D_n^{\alpha_n}L^j[u]=0}$, $j=1,\dots,l$, $|\alpha|\leqslant s$. 
Here $D_i$ is the total derivative operator with respect to the variable~$x_i$.
The projection of the corresponding manifold on $J^{k+s-q}$, where $q\in\mathbb N$ and $q\leqslant s$, is denoted by $\smash{\mathcal L_{k+s-q}^{(q)}}$. 
The system~$\mathcal L_k$ is called \emph{formally compatible} (or \emph{formally integrable}) if 
$\smash{\mathcal L_{k+s}^{(1)}=\mathcal L_{k+s}}$ for any $s\in\mathbb N\cup\{0\}$ \cite{Pommaret1994,Seiler1994,Seiler2010}. 

The first obstacle in harmonizing the above definition of formal compatibility and the definition of generalized conditional symmetry 
of evolution equations is that the equations $\mathcal E$ and $\eta=0$ have, as a rule, different orders. 
Therefore, trivial differential consequences of these equations should be attached to the joint system of $\mathcal E$ and $\eta=0$ 
before testing its compatibility. 

The other obstacle is that the order of each of these equation may be lowered on the manifold of the other equation. 
To avoid this, we take the following steps. 

Firstly, we replace the equation $\eta=0$ by the equation $\hat\eta=0$ 
which is equivalent to the equation $\eta=0$ under the condition that $\mathcal E$ is satisfied, 
does not contain derivatives involving differentiation with respect to~$t$ 
and is of minimal order among equations possessing these properties. 
In other words, we convert the generalized vector field $Q=\eta\p_u$ into its reduced form $\hat Q=\hat\eta\p_u$, 
where $\hat\eta$ is of minimal order.

Secondly, instead of the usual order of derivatives and differential functions 
with the independent variables~$t$ and~$x$ 
we use the weight~$\rm w$ defined by the rule:
\[
{\rm w}(t)={\rm w}(x)=0,\quad {\rm w}(u_\alpha)=[\alpha]:=r\alpha_0+\alpha_1.
\]
The technique of working with a weight does not differ essentially from the order technique and so a number of analogous notions can be introduced.
Thus, in the \emph{weighted jet space} $J^k_{\rm w}(t,x|u)$ we include the variables whose weight is not greater than~$k$.
The weight ${\rm w}(L)$ of any differential function $L=L[u]$ equals the maximal weight of variables explicitly appearing in~$L$. 
The weight of the equation $L[u]=0$ equals ${\rm w}(L)$. 
In particular, ${\rm w}(u_t)={\rm w}(H)=r$. 
This implies that the weight of the equation $\mathcal E$ cannot be lowered by using differential consequences of the equation $\hat\eta=0$. 
The introduction of the weight also justifies the exclusion of the derivative $u_t$ and mixed derivatives from $\eta$
since in contrast to the usual order the weight cannot be raised under this exclusion. 
Note that the weight is also preserved by admissible transformations of evolution equations. 
As for any point or contact transformation between two evolution equations 
the expression of the transformed $t$ depends only on~$t$ \cite{Kingston&Sophocleous1998,Magadeev1993}, and 
the weight of every differential function $L[u]$ is invariant with respect to such transformations. 

Given a system $\mathcal L_k$ of $l$~differential equations $\smash{L^j[u]=0}$, $j=1,\dots,l$, 
in the independent variables $\smash{(t,x)}$ and the dependent variable $u$, which involves derivatives of~$u$ up to weight~$k$, 
the $s$th weight prolongation $\mathcal L_{k+s}$ of the system~$\mathcal L_k$, $s\in\mathbb N$, is the system in $J^{k+s}_{\rm w}(t,x|u)$ 
consisting of the equations $\smash{D_t^{\alpha_0}D_x^{\alpha_1}L^j[u]=0}$, $[\alpha]\leqslant s$. 
The system~$\mathcal L_{k+s}$ is constructed from the system $\mathcal L_{k+s-1}$ 
by attaching to~$\mathcal L_{k+s-1}$ the equations $\smash{D_t^{\alpha_0}D_x^{\alpha_1}L^j[u]=0}$, $[\alpha]=s$. 
The set of these attached equations can be viewed to consist of    
the equations obtained via acting by~$D_x$ on $\smash{D_t^{\alpha_0}D_x^{\alpha_1}L^j[u]=0}$, $[\alpha]=s-1$,  
and, if $r$ divides $s$, the equation obtained from $\smash{D_t^{s/r-1}L^j[u]=0}$ via acting by~$D_t$.

Let $s=\max(r,\rho)$, i.e., $s$ is the weight of the joint system $\mathcal S$ of the differential equations $\mathcal E$ and $\hat\eta=0$, 
where $\rho={\rm w}(\hat\eta)=\ord\hat\eta$. 
Denote by $P_q$ and $\mathcal P_q$, where $q\geqslant s$, the system 
\[
D_x^k\hat\eta=0,\ k=0,\dots,q-\rho, \quad D_t^{\alpha_0}D_x^{\alpha_1}(u_t-H)=0,\ [\alpha]\leqslant q-r
\]
of algebraic equations in the jet space $J^q_{\rm w}(t,x|u)$ and the corresponding manifold, respectively. 
In particular, the system~$P_s$ is obtained via completing the reduced systems of~$\mathcal E$ and~$\hat\eta=0$ 
by trivial differential consequences which have, as equations, weights not greater than~$s$. 

\begin{proposition}\label{PropositionOnFormalConpatibilityForHigherOrderCondInvOfEvolEqs}
The system $P_s$ is formally compatible if and only if 
the evolution equation~$\mathcal E$ is conditionally invariant with respect to the operator $Q=\eta\p_u$.
\end{proposition}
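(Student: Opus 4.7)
My plan is to identify the critical compatibility condition that first arises at weight $r+\rho$ and to show that its vanishing is both necessary and sufficient for formal compatibility at every prolongation step. For convenience I place $Q$ in canonical form $\eta=u_\rho-\check\eta(t,x,u_{(\rho-1,x)})$; this is legitimate since $\hat\eta$ has maximal rank in $u_\rho$, so all equivalent reduced forms differ from the canonical one by a nonvanishing multiplier. In canonical form, the equations $D_x^k\hat\eta=0$ in $P_q$ are solvable as $u_{\rho+k}=\hat D_x^k\check\eta$, and the prolongations $D_t^{\alpha_0}D_x^{\alpha_1}(u_t-H)=0$ are solvable for the mixed derivatives $u_{\alpha_0+1,\alpha_1}$. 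Consequently $\mathcal P_q$ is a graph over the base variables $(t,x,u_0,\ldots,u_{\rho-1})$, and each new equation introduced when passing from $\mathcal P_q$ to $\mathcal P_{q+1}$ is individually solvable for a single top-weight variable.

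The Pommaret prolongation of $P_s$ formally contains additional equations $D_t^{\alpha_0}D_x^{\alpha_1}\hat\eta=0$ with $\alpha_0\geq 1$ that are not literally written in the definition of $P_q$. Let $\mathcal R$ denote the ring of smooth functions of $t,x$ and the pure $x$-derivatives $u_0,u_1,\ldots$; on $\mathcal R$ the derivations $D_x$ and $\tilde D_t$ commute, and after using $\mathcal E$ and its prolongations to eliminate mixed derivatives, $D_t^{\alpha_0}D_x^{\alpha_1}\hat\eta$ coincides with $D_x^{\alpha_1}\tilde D_t^{\alpha_0}\hat\eta\in\mathcal R$. The first genuinely new condition therefore occurs at weight $r+\rho$: it is $\tilde D_t\hat\eta=0$. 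Substituting $u_{\rho+k}=\hat D_x^k\check\eta$ for $k=0,\ldots,r$ from the equations $\hat D_x^k\hat\eta=0$ present in $P_{r+\rho}$, the expression $\tilde D_t\hat\eta$ reduces, as in~\eqref{EqForGenCondSymsOfEvolEqs}, to the residual
\[
R(t,x,u_0,\ldots,u_{\rho-1}):=\hat D_x^\rho\hat H-\hat D_t\check\eta,
\]
which depends only on the base variables.

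For the forward direction, formal compatibility of $P_s$ requires in particular that the projection of the prolongation onto $J^{r+\rho-1}_{\rm w}$ coincide with $\mathcal P_{r+\rho-1}$. Since $R=0$ holds throughout the prolongation yet depends only on variables of weight at most $\rho-1$—which are free in $\mathcal P_{r+\rho-1}$—this forces $R\equiv 0$ as an identity in $\mathcal R$, which is precisely~\eqref{EqForGenCondSymsOfEvolEqs}. For the backward direction, assume $R\equiv 0$. By the Hadamard lemma applied via the maximal rank of $\hat\eta$ in $u_\rho$, this identity yields a representation $\tilde D_t\hat\eta=\sum_{k=0}^r\mu_k D_x^k\hat\eta$ in $\mathcal R$ with smooth coefficients $\mu_k$; in particular $\tilde D_t\hat\eta$ lies in the differential ideal $\mathcal I\subset\mathcal R$ generated by $\hat\eta$. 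Applying $D_x$ and iterating $\tilde D_t$, using $[\tilde D_t,D_x]=0$ and the Leibniz rule, one shows by induction that $\mathcal I$ is stable under $\tilde D_t$, whence every $\tilde D_t^{\alpha_0}D_x^{\alpha_1}\hat\eta$ lies in $\mathcal I$. Consequently the Pommaret prolongation of $P_s$ produces no new constraint at any projection step, giving formal compatibility.

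The principal obstacle is the induction of the backward direction: having shown $\tilde D_t\hat\eta\in\mathcal I$, one must verify that $\tilde D_t^{\alpha_0}\hat\eta\in\mathcal I$ for all $\alpha_0\geq 2$. Writing $\tilde D_t\hat\eta=\sum_k\mu_k D_x^k\hat\eta$ and applying $\tilde D_t$ gives $\tilde D_t^2\hat\eta=\sum_k[(\tilde D_t\mu_k)D_x^k\hat\eta+\mu_k D_x^k\tilde D_t\hat\eta]$; the first summand lies in $\mathcal I$ by definition and the second by the $D_x$-closure of $\mathcal I$ together with the inductive hypothesis. Careful weight bookkeeping is then needed to translate these algebraic inclusions back into the projection equality $\mathcal P_{q}^{(1)}=\mathcal P_q$ for each $q\ge s$, and to confirm that the asymmetric listing of $P_q$ (pure $D_x$-derivatives of $\hat\eta$ but full $(D_t,D_x)$-prolongations of $u_t-H$) is not a restriction.
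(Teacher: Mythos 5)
Your argument is correct and follows essentially the same route as the paper's: you isolate the single weight-$(r+\rho)$ obstruction $\hat D_x^\rho\hat H-\hat D_t\check\eta$, use the parametric freedom of the base variables on $\mathcal P_{r+\rho-1}$ to force it to vanish identically in the forward direction, and propagate its vanishing to all mixed $t$-derivatives of $\hat\eta$ in the backward direction. The paper phrases that last step as an induction showing the weight prolongations $R_q$ of $P_s$ coincide with $P_q$, closing with the functional independence of the top-weight equations with respect to the weight-$(q+1)$ variables (exactly the \emph{graph} observation you make at the outset), whereas you phrase it as $\tilde D_t$-stability of the $D_x$-differential ideal generated by $\hat\eta$; these are the same mechanism.
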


\begin{proof}
By $R_q$, where $q\geqslant s$, we denote the $(q-s)$th weight prolongation of the system $P_s$. 
Thus, the system~$R_s$ coincides with~$P_s$. 
Additionally to the equations of~$P_q$, the system~$R_q$ includes the equations 
$D_t^{\alpha_0}D_x^{\alpha_1}\hat\eta=0$, where $[\alpha]\leqslant q-\rho$ and $\alpha_0\ne0$.

Suppose that the system $R_s$ is formally compatible. 
Consider the differential function 
\[
F=D_t\hat\eta-H_{u_r}D_x^r\hat\eta-\hat\eta_{u_\rho}D_x^\rho(u_t-H).
\]
The equation $F=0$ is a consequence of $R_{r+\rho}$, and ${\rm w}(F)\leqslant r+\rho-1$. 
As $R_{r+\rho-1}^{(1)}=R_{r+\rho-1}$ by assumption, the equation $F=0$ also is a consequence of the system $R_{r+\rho-1}$ 
which coincides with the system~$P_{r+\rho-1}$.
We conclude that $F|_{\mathcal P_{r+\rho-1}}=0$ and, therefore, $D_t\hat\eta|_{\mathcal P_{r+\rho}}=0$. 
The last equality is nothing but a form of the conditional invariance criterion. 

Conversely, let the evolution equation~$\mathcal E$ be conditionally invariant with respect to the operator $Q=\eta\p_u$.
Then we prove by induction with respect to the value~$q$ that $R_q=P_q$.
The equality is obvious for $q=s$. Supposing that the equality is true for a fixed~$q$, let us prove it for $q+1$. 
As $R_q=P_q$, the prolonged system $R_{q+1}$ includes $P_{q+1}$ as a subsystem and additionally contains the equations
$D_tD_x^l\hat\eta=0$, $l=0,\dots,q+1-\rho-r$, which are identities on $\mathcal P_{q+1}$ 
since $D_tD_x^l\hat\eta|_{\mathcal P_{r+\rho+l}}=D_x^lD_t\hat\eta|_{\mathcal P_{r+\rho+l}}=0$. 
(To prove this last equality, use the fact that $D_t\hat\eta|_{\mathcal P_{r+\rho}}=0$, apply 
the Hadamard lemma, and act by $D_x^l$ on the resulting representation.)
Hence $R_{q+1}=P_{q+1}$, completing the induction.
Among the left hand sides of equations from $P_{q+1}$ only the differential functions $D_x^{q+1-\rho}\hat\eta$ and 
$D_t^{\alpha_0}D_x^{\alpha_1}(u_t-H)$, $[\alpha]=q+1-r$ depend on variables of weight $q+1$, 
and they are functionally independent with respect to these variables. 
Hence $R_q^{(1)}=P_q=R_q$.
\end{proof}

\section{Passivity and conditional symmetry}\label{SectionOnPassivityAndCondSym}

For the convenience of the reader, at first we briefly present basic notions of Riquier's compatibility theory. 
See, e.g., \cite{Marvan2009} and references therein for a more extended presentation of these notions and related results. 
We again use the notation from the beginning of the previous section. 
In what follows the indices~$a$ and~$b$ run from~1 to~$m$, the indices~$i$ and~$j$ run from~1 to~$n$, 
$\alpha$ and~$\beta$ run through the multiindex set $\{(\alpha_1,\ldots,\alpha_n)\mid \alpha_i\in\mathbb{N}\cup\{0\}\}$.

Usually the set of derivatives $\{u^a_\alpha\}$ is assumed partially ordered. 
A derivative~$u^a_\alpha$ is said to be lower (resp.\ strictly lower) than a derivative~$u^b_\beta$, 
and we write $u^a_\alpha\leqslant u^b_\beta$ (resp.\ $u^a_\alpha<u^b_\beta$), if $a=b$ and $\alpha_i\leqslant\beta_i$ 
(resp. $a=b$, $\alpha_i\leqslant\beta_i$ and $\alpha\ne\beta$).
In contrast to this, the initial point of Riquier's theory is a suitable total ordering of derivatives, which is compatible with differentiations. 
Namely, a \emph{ranking} is a total (or linear) ordering~$\preccurlyeq$ of derivatives such that 
$u^a_\alpha\prec D_iu^a_\alpha$ and if $u^a_\alpha\prec u^b_\beta$ then $D_iu^a_\alpha\prec D_iu^b_\beta$. 
(As usual, $u^a_\alpha\prec u^b_\beta$ means that $u^a_\alpha\preccurlyeq u^b_\beta$ and $u^a_\alpha\ne u^b_\beta$.) 
In view of these properties of a ranking, the condition $u^a_\alpha\leqslant u^b_\beta$ implies $u^a_\alpha\preccurlyeq u^b_\beta$.

Suppose that a ranking of derivatives is fixed. 
By the \emph{leading derivative} of a differential function~$F[u]$ we mean the maximal element in the finite set of derivatives
$\{u^a_\alpha\mid F_{u^a_\alpha}\ne0\}$ if this set is not empty. 
Consider a system~$\mathcal L$ of finitely many differential equations resolved with respect to their leading derivatives:
\[
u^{a_s}_{\alpha_s}=F^s[u],\quad s=1,\dots,l.
\]
The set of leading derivatives of~$\mathcal L$ consists of the leading derivatives of the above equations, i.e., 
it equals $\{u^b_\beta\mid \exists\, u^{a_s}_{\alpha_s}\colon u^b_\beta=u^{a_s}_{\alpha_s}\}$. 
The infinite prolongation~$\mathcal L_\infty$ of the system~$\mathcal L$ is formed 
by all possible differential consequences \[u^{a_s}_{\alpha_s+\beta}=D_1^{\beta_1}\ldots D_n^{\beta_n} F^s[u].\] 
Each of the differential consequences is automatically resolved with respect to its leading derivative, 
which is called a \emph{principal derivative} of the initial system~$\mathcal L$. 
In other words, the set of principal derivatives of~$\mathcal L$ consists of the derivatives of the leading derivatives of~$\mathcal L$. 
The other derivatives are called \emph{parametric derivatives} of~$\mathcal L$. 

Differential consequences of~$\mathcal L$ involving only parametric derivatives are said to be 
\emph{integrability} (or \emph{compatibility}) \emph{conditions}. 
A system~$\mathcal L$ is \emph{active} if it has unsatisfied integrability conditions, 
otherwise it is called a \emph{passive} system.

A system~$\mathcal L$ of equations resolved with respect to its leading derivatives is called
\begin{itemize}\itemsep=0ex
\item
\emph{triangular} if every leading derivative of~$\mathcal L$ is the leading derivative of only one equation  
\item
\emph{autoreduced} if no principal derivative occurs on the right hand side of any equation of~$\mathcal L$  
\item
\emph{orthonomic} if it is triangular and autoreduced.  
\end{itemize} 

It is obvious that all of the above properties depend on the choice of ranking.   

Let us return to evolution equations of the form~\eqref{EqGenEvolEq}. 
The basic idea for introducing a ranking is to assume that $u_r\prec u_t\prec u_{r+1}$. 
The extension of the last condition to all derivatives of~$u$ leads to the following ranking:
\[
u_\alpha\preccurlyeq u_\beta\quad\Longleftrightarrow\quad [\alpha]<[\beta]\vee ([\alpha]=[\beta]\wedge \alpha_0\leqslant\beta_0).
\]
We recall that $[\alpha]=r\alpha_0+\alpha_1$. 
This ranking agrees well with the derivative weight introduced in the previous section.
We rank derivatives by their weight and then use the lexicographic order for derivatives with the same weight. 

After this ranking is fixed, the exclusion of derivatives involving differentiation with respect to~$t$ from the equation $\eta=0$ 
by means of differential consequences of~$\mathcal E$ and the subsequent solving of the resulting equation $\hat\eta=0$ with respect to 
its leading derivative $u_\rho$ can be viewed as replacing the joint system of~$\mathcal E$ and~$\eta=0$ 
by the equivalent orthonomic system $\mathcal S$
\[
u_t=\hat H,\quad u_\rho=\check\eta
\]
without mixed derivatives on the left hand side. 
Here the function~$\hat H$ coincides with that defined after equation~\eqref{EqForGenCondSymsOfEvolEqs}.
The leading derivatives of this system are~$u_t$ and~$u_\rho$; 
the principal derivatives are $u_\alpha$, where $\alpha_0\geqslant1$ or $\alpha_1\geqslant\rho$; 
and the other derivatives $u_0$,~\dots, $u_\rho$ are parametric.

\begin{proposition}
The equation~$\mathcal E$ is conditionally invariant with respect to the operator $Q=\eta\p_u$ if and only if 
the system~$\mathcal S$ is passive with respect to the above ranking. 
\end{proposition}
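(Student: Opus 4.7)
The plan is to apply the standard passivity test for an orthonomic system of two equations. First I would observe that $\mathcal S$ is indeed orthonomic for the chosen ranking: its two equations are resolved with respect to the distinct leading derivatives $u_t$ and $u_\rho$, and, after the substitutions built into~$\hat H$ in the case $\rho\leqslant r$, the right-hand sides $\hat H$ and $\check\eta$ involve only the parametric derivatives $u_0,\dots,u_{\rho-1}$ together with $t$ and~$x$. Therefore no principal derivative appears on the right-hand side of either equation, and passivity reduces to verifying that the integrability conditions generated by cross-differentiation of the two equations of~$\mathcal S$ are satisfied.

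Next I would locate the minimal principal derivative reachable from both equations of~$\mathcal S$ by differentiation, namely $u_{1,\rho}$. Applying $D_x^\rho$ to $u_t=\hat H$ and $D_t$ to $u_\rho=\check\eta$, and then eliminating all principal derivatives that occur on the right-hand sides by means of~$\mathcal S$ itself, I obtain the two reduced expressions $\hat D_x^\rho\hat H$ and $\hat D_t\check\eta$; the substitution is exactly the one that replaces $D_x$ and $D_t$ by the reduced operators~$\hat D_x$ and~$\hat D_t$ defined after equation~\eqref{EqForGenCondSymsOfEvolEqs}. Equating these two expressions yields the minimal integrability condition, which coincides with the determining equation~\eqref{EqForGenCondSymsOfEvolEqs} and therefore, by the discussion of that equation, with the criterion of generalized conditional invariance of~$\mathcal E$ with respect to~$Q=\eta\p_u$.

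The remaining step is to show that once this minimal integrability condition holds, no further ones arise. A principal derivative~$u_\alpha$ with $\alpha_0\geqslant 1$ and $\alpha_1<\rho$, or with $\alpha_0=0$ and $\alpha_1\geqslant\rho$, is a differential consequence of only one of the two equations of~$\mathcal S$ and contributes no integrability condition. For a principal derivative~$u_\alpha$ with $\alpha_0\geqslant1$ and $\alpha_1\geqslant\rho$ the two expressions obtained from the two equations coincide after reduction, since they are produced by applying $D_t^{\alpha_0-1}D_x^{\alpha_1-\rho}$ to the two sides of the identity $\hat D_t\check\eta=\hat D_x^\rho\hat H$, using commutativity of $D_t$ and~$D_x$.

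The main obstacle is the inductive bookkeeping required in this last step: one must check carefully that reducing both sides of an arbitrary cross-derivative against~$\mathcal S$ really produces the same element of the parametric ring, rather than something that agrees only after further reduction. This induction parallels the argument $R_{q+1}=P_{q+1}$ carried out in the proof of Proposition~\ref{PropositionOnFormalConpatibilityForHigherOrderCondInvOfEvolEqs} and, if desired, can be bypassed altogether by noting that $\mathcal S$ is orthonomically equivalent to the completion~$P_s$ of that proposition and that passivity of~$\mathcal S$ is then tantamount to formal compatibility of~$P_s$, which by Proposition~\ref{PropositionOnFormalConpatibilityForHigherOrderCondInvOfEvolEqs} is in turn equivalent to conditional invariance of~$\mathcal E$ with respect to~$Q$.
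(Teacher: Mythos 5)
Your proposal is correct and follows essentially the same route as the paper: the integrability conditions of the orthonomic system $\mathcal S$ arise from cross-differentiation at the mixed principal derivatives $u_{\alpha_0+1,\alpha_1+\rho}$, and the minimal one, after reduction against $\mathcal S$, is exactly the determining equation~\eqref{EqForGenCondSymsOfEvolEqs}, i.e.\ the conditional invariance criterion. You merely spell out more explicitly than the paper does why the higher-order integrability conditions follow from the minimal one (and note the alternative of deferring to Proposition~\ref{PropositionOnFormalConpatibilityForHigherOrderCondInvOfEvolEqs}), which is a harmless elaboration of the same argument.
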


\begin{proof}
The infinite prolongation of~$\mathcal S$ is 
the system~$\mathcal S_\infty$ 
\[
u_{\alpha_0+1,\alpha_1}=D_t^{\alpha_0}D_x^{\alpha_1}\hat H,\quad 
u_{\alpha_0,\alpha_1+\rho}=D_t^{\alpha_0}D_x^{\alpha_1}\check\eta.
\] 
The simplest possibility for deriving integrability conditions of~$\mathcal S$ is to equate the expressions for 
mixed derivatives obtained by differentiating the first and second equations, respectively: 
$D_t^{\alpha_0}D_x^{\alpha_1+\rho}\hat H=D_t^{\alpha_0+1}D_x^{\alpha_1}\check\eta$. 
Each of the derived equations is an identity on equations of~$\mathcal S_\infty$ involving only derivatives lower than 
the associated mixed derivative (and, consequently, there are no other differential consequences) 
if and only if the conditional invariance criterion is satisfied by the equation~$\mathcal E$ and the operator $Q=\eta\p_u$, 
cf.\ equation~\eqref{EqForGenCondSymsOfEvolEqs}.
\end{proof}

\section{Relation to involutivity of vector fields}\label{SectionOnRelationToInvolutionOfVectorFields}

A connection between generalized conditional symmetries of systems of evolution equations (in terms of invariant manifolds) 
and involutivity of certain system of vector fields was first noted by Kaptsov~\cite{Kaptsov1992} 
(see also~\cite[p.~131]{Andreev&Kaptsov&Pukhnachov&Rodionov1998}). 
For simplicity and uniformity, we restrict our considerations to the class~\eqref{EqGenEvolEq}. 

Let the function~$u$ be a solution of the joint system~$\mathcal S$ of the equations~$\mathcal E$ and $u_\rho=\check\eta$. 
We introduce the new dependent variables $v^{a-1}=u_{a-1}$ and two vector fields
\[
\check D_x=\p_x+\sum_{a=1}^{\rho-1}v^a\p_{v^{a-1}}+\check\eta\p_{v^{\rho-1}},\quad 
\check D_t=\p_t+(\check D_x^{a-1}\check H)\p_{v^{a-1}}, 
\]
where 
$\check H=H(t,x,v^0,\dots, v^r)$ if $\rho>r$,
$\check H=H(t,x,v^0,\dots, v^{\rho-1},\check\eta,\check D_x\check\eta,\dots,\check D_x^{r-\rho}\check\eta)$ if $\rho\leqslant r$, and 
$u_{a-1}$ is replaced by $v^{a-1}$ in $\check\eta$. 

In view of the equations for~$u$, the functions~$v^{a-1}$ satisfy the system of differential equations 
\begin{equation}\label{EqAssSysForInvolution}
v^{a-1}_x=v^a,\ a=1,\dots,\rho-1,\quad v^{\rho-1}_x=\check\eta(t,x,v^0,\dots,v^{\rho-1}),\quad v^{b-1}_t=\check D_x^{b-1}\check H
\end{equation}
which is associated with the system of vector fields $\{\check D_t,\check D_x\}$.

\begin{proposition}
The equation~$\mathcal E$ is conditionally invariant with respect to the operator $Q=\eta\p_u$ if and only if 
the system of vector fields $\{\check D_t,\check D_x\}$ is in involution. 
\end{proposition}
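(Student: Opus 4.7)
The plan is to reduce involution of the two-field system $\{\check D_t,\check D_x\}$ to the Frobenius-type condition $[\check D_t,\check D_x]=0$ and then to show that this commutator evaluates, essentially, to the determining equation~\eqref{EqForGenCondSymsOfEvolEqs}.

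First I would observe that any $C^\infty$-combination $f\check D_t+g\check D_x$ has $\p_t$-component $f$ and $\p_x$-component $g$, while the commutator $[\check D_t,\check D_x]$ has no $\p_t$ or $\p_x$ components (only $\p_{v^{a-1}}$ terms appear). Consequently requiring the commutator to lie in the module spanned by $\check D_t$ and $\check D_x$ forces $f=g=0$, so involutivity is equivalent to $[\check D_t,\check D_x]=0$. Writing $\check D_t=\p_t+T^a\p_{v^{a-1}}$ with $T^a=\check D_x^{a-1}\check H$ and $\check D_x=\p_x+X^a\p_{v^{a-1}}$ with $X^a=v^a$ for $a<\rho$ and $X^\rho=\check\eta$, one obtains
\[
[\check D_t,\check D_x]=(\check D_t X^a-\check D_x T^a)\p_{v^{a-1}}.
\]
For $a<\rho$ the components telescope: $\check D_t v^a=T^{a+1}$, while $\check D_x T^a=\check D_x^a\check H=T^{a+1}$, so they cancel identically by the echelon construction of~$\check D_t$. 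Only the $a=\rho$ contribution survives, and it equals $(\check D_t\check\eta-\check D_x^\rho\check H)\p_{v^{\rho-1}}$.

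To finish, I would note that under the identification $v^{a-1}\leftrightarrow u_{a-1}$ the operators $\check D_x$, $\check D_t$ and the function $\check H$ coincide with $\hat D_x$, $\hat D_t$ and $\hat H$ from~\eqref{EqForGenCondSymsOfEvolEqs}, so the vanishing of the surviving bracket component is precisely the determining equation $\hat D_t\check\eta=\hat D_x^\rho\hat H$, which is the conditional invariance criterion for $Q=\eta\p_u$. The only nontrivial step in the argument is the telescoping cancellation for $a<\rho$; this is immediate from the way $\check D_t$ has been built to encode successive total $x$-derivatives of $\check H$, so no substantial obstacle is anticipated.
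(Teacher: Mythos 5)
Your proof is correct and follows essentially the same route as the paper: compute $[\check D_t,\check D_x]$, observe that only the $\p_{v^{\rho-1}}$ component survives, and identify its vanishing with the determining equation~\eqref{EqForGenCondSymsOfEvolEqs}. You merely make explicit two steps the paper leaves implicit (that involutivity forces the bracket to vanish outright because it has no $\p_t$ or $\p_x$ components, and the telescoping cancellation for $a<\rho$), both of which are verified correctly.
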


\begin{proof}
$[\check D_t,\check D_x]=(\check D_x^\rho\check H-\check D_t\check\eta)\p_{v^{\rho-1}}$. 
Therefore, the system of vector fields $\{\check D_t,\check D_x\}$ is in involution if and only if 
these vector fields commute, i.e., $\check D_x^\rho\check H-\check D_t\check\eta=0$. 
This last equation, after the inverse substitution  $u_{a-1}=v^{a-1}$, is equivalent to equation~\eqref{EqForGenCondSymsOfEvolEqs}.
\end{proof}

If the system of vector fields $\{\check D_t,\check D_x\}$ is in involution, 
the associated system~\eqref{EqAssSysForInvolution} is completely integrable in the old terminology 
(see, e.g., \cite[p.~1]{Eisenhart1933}).

\begin{corollary}\label{CorollaryOnCondSymsAndSolutionSets}
A $(1+1)$-dimensional evolution equation~$\mathcal E$ is conditionally invariant with respect to a $\rho$th-order operator~$Q$ in reduced form
if and only if it possesses a $\rho$-parametric family of $Q$-invariant solutions. 
\end{corollary}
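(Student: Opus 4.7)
The plan is to deduce the corollary from the preceding Proposition by invoking the classical Frobenius integrability theorem for the completely integrable first-order system~\eqref{EqAssSysForInvolution} associated with $\{\check D_t,\check D_x\}$.

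For the ``only if'' direction, suppose that $\mathcal E$ is conditionally invariant with respect to $Q$, reduced to canonical form so that the characteristic reads $u_\rho-\check\eta$. By the preceding Proposition the vector fields $\check D_t$ and $\check D_x$ commute. Hence system~\eqref{EqAssSysForInvolution} is completely integrable in the sense of Frobenius, and for every initial datum $(v^0_0,\dots,v^{\rho-1}_0)\in\mathbb R^\rho$ prescribed at a fixed base point $(t_0,x_0)$ it admits a unique local solution $(v^0,\dots,v^{\rho-1})$. Setting $u:=v^0$ yields a solution of the joint system of~$\mathcal E$ and $\hat\eta=0$, i.e., a $Q$-invariant solution of~$\mathcal E$. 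Allowing the initial data to range over an open neighbourhood in~$\mathbb R^\rho$ produces the desired $\rho$-parametric family of $Q$-invariant solutions, naturally parametrised by the initial values of $u,u_x,\dots,u_{\rho-1}$ at $(t_0,x_0)$.

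For the converse, suppose that $u=f(t,x;c_1,\dots,c_\rho)$ is a $\rho$-parametric family of $Q$-invariant solutions of~$\mathcal E$; here ``$\rho$-parametric'' is taken in the natural nondegenerate sense that the map $c\mapsto(f,\p_xf,\dots,\p_x^{\rho-1}f)(t_0,x_0;c)$ has rank~$\rho$ at some reference value~$c_0$. Every member of the family satisfies both $\mathcal E$ and $u_\rho=\check\eta$ together with all their differential consequences. Equating the mixed derivative $\p_t\p_x^\rho f$ computed in two ways then shows that the differential function
\[
\Phi(t,x,u_0,\dots,u_{\rho-1}):=\hat D_t\check\eta-\hat D_x^\rho\hat H
\]
vanishes when evaluated along each member $f(\cdot\,;c)$. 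By the rank condition this evaluation sweeps out an open neighbourhood of $(t_0,x_0,\bar u_0,\dots,\bar u_{\rho-1})$ in the ambient jet space, whence $\Phi\equiv 0$ locally. This is precisely the determining equation~\eqref{EqForGenCondSymsOfEvolEqs}, so by the preceding Proposition $Q$ is a generalized conditional symmetry of~$\mathcal E$.

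The main obstacle is pinning down what ``$\rho$-parametric family'' should mean: a nondegeneracy condition on the parameter dependence of~$f$ (in essence, nonvanishing of a Wronskian-type determinant in the jets of~$f$ with respect to~$c$) is required to convert the pointwise vanishing of~$\Phi$ along the family into its vanishing on an open subset of the jet space, from which the determining identity follows. Once this hypothesis is built into the statement, both directions reduce to the previous proposition combined with Frobenius.
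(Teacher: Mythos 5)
Your proposal is correct, and the ``only if'' direction coincides with the paper's argument: involutivity of $\{\check D_t,\check D_x\}$ plus Frobenius yields a local solution of system~\eqref{EqAssSysForInvolution} for each initial datum at a base point, giving a genuinely $\rho$-parametric family. The converse, however, takes a genuinely different route. The paper argues by contraposition: if $Q$ is not a conditional symmetry, the non-involutive system $\{\check D_t,\check D_x\}$ must be completed by the commutator $[\check D_t,\check D_x]$ and its successors near a point where $\check D_x^\rho\check H-\check D_t\check\eta\ne0$, the span of the completed system has dimension greater than two, and hence the joint solution set is parameterized by fewer than $\rho$ constants. You instead argue directly: equality of the mixed partials $\p_t\p_x^\rho f=\p_x^\rho\p_t f$ along each member of the family forces the determining function $\hat D_t\check\eta-\hat D_x^\rho\hat H$ to vanish on the image of $(t,x,\bar\varkappa)\mapsto(t,x,f_{(\rho-1,x)})$, and the rank condition makes this image open, so the determining equation~\eqref{EqForGenCondSymsOfEvolEqs} holds identically near the reference jet. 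Your version is more elementary (inverse function theorem rather than iterated completion of a vector-field system) and makes explicit the nondegeneracy hypothesis that the paper only formalizes afterwards, in Definition~\ref{DefinitionOfEssentialParametersInFamiliesOfFunctions} and Lemma~\ref{LemmaCriterionOnEssentialParametersOfSolutionsOfEvolEqs}, where your rank condition is shown to be exactly the criterion for all $\rho$ parameters to be essential; taking it as the definition, as you do, avoids any circularity with Corollary~\ref{CorollaryOnNumberOfParametersInGenSolutionsOfEvolEqsAndODEwrtX}. What the paper's contrapositive buys in exchange is a sharper quantitative conclusion, namely an upper bound on the number of essential parameters when conditional invariance fails, which is what feeds into Corollary~\ref{CorollaryOnNumberOfParametersInGenSolutionsOfEvolEqsAndODEwrtX} and the subsequent lemma.
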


\begin{proof}
Suppose that the equation~$\mathcal E$ is conditionally invariant with respect to the operator~$Q$. 
Then the system of vector fields $\{\check D_t,\check D_x\}$ is in involution and, therefore, is integrable by the Frobenius theorem.
The dimension of the span of $\{\check D_t,\check D_x\}$ equals two for any fixed point $(t,x,v^0,\dots,v^{\rho-1})$. 
Therefore, the general solution of the system~\eqref{EqAssSysForInvolution} is parameterized by \mbox{$2+\rho-2=\rho$} arbitrary constants. 
Its projection to~$v^0$ necessarily contains all the arbitrary constants 
and gives the general solution of the joint system of~$\mathcal E$ and~$\hat\eta=0$. 

If the equation~$\mathcal E$ is not conditionally invariant with respect to the operator $Q$ then 
the system of vector fields $\{\check D_t,\check D_x\}$ is not in involution and can be iteratively completed for integrability by 
$[\check D_t,\check D_x]$ and the other subsequent commutators 
which do not lie in the span (over the ring of smooth functions)
of the system of vector fields from the previous steps.
The dimension of the span of the completed system is greater than two. 
(We consider a neighborhood of a point in which $\check D_x^\rho\check H-\check D_t\check\eta\ne0$.)
Therefore, the general solution of system~\eqref{EqAssSysForInvolution} is parameterized by less than $\rho$ arbitrary constants.  
\end{proof}

\begin{corollary}\label{CorollaryOnNumberOfParametersInGenSolutionsOfEvolEqsAndODEwrtX}
The set of joint solutions of an equation $u_\rho=\check\eta(t,x,u_{(\rho-1,x)})$ and an evolution equation~$\mathcal E$ 
is parameterized by at most $\rho$ constants.
\end{corollary}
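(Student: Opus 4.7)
The plan is to derive this statement as a direct byproduct of the proof of Corollary~\ref{CorollaryOnCondSymsAndSolutionSets}. To the differential constraint $u_\rho=\check\eta(t,x,u_{(\rho-1,x)})$ I attach the canonical-form generalized vector field $Q=(u_\rho-\check\eta)\p_u$ of the shape~\eqref{EqHigheeOrderOpInCanonicalEvolForm}, so that the joint solutions of $\mathcal E$ and the constraint coincide exactly with the $Q$-invariant solutions of~$\mathcal E$. The task then reduces to bounding the size of this latter set by~$\rho$ parameters, irrespective of whether or not $\mathcal E$ happens to be conditionally invariant with respect to~$Q$.

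I would then split into the two cases already treated in Corollary~\ref{CorollaryOnCondSymsAndSolutionSets}. In the conditionally invariant case the corollary yields outright a $\rho$-parametric family of joint solutions, which saturates the bound. In the opposite case I would invoke the second half of the proof of Corollary~\ref{CorollaryOnCondSymsAndSolutionSets}: the system of vector fields $\{\check D_t,\check D_x\}$ on the auxiliary space coordinatized by $(t,x,v^0,\dots,v^{\rho-1})$ fails to be in involution, and iterative adjunction of the commutator $[\check D_t,\check D_x]$ together with any subsequent brackets that do not lie in the span over smooth functions of vector fields already accumulated produces an involutive distribution whose pointwise span has dimension strictly greater than~$2$. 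Applying the Frobenius theorem to this completed distribution on the $(\rho+2)$-dimensional ambient space, the maximal integral manifolds have dimension strictly less than~$\rho$, so the general solution of the associated system~\eqref{EqAssSysForInvolution} depends on strictly fewer than~$\rho$ constants. Projecting onto the $v^0=u$ component preserves the parameter count and delivers the required bound for the joint solutions of $\mathcal E$ and $u_\rho=\check\eta$.

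The step I expect to be most delicate is the iterative completion argument in the non-involutive case, since the dimension of the span of the accumulated vector fields need not be constant over the whole space. The standard remedy, and the one adopted implicitly already in the proof of Corollary~\ref{CorollaryOnCondSymsAndSolutionSets}, is to restrict attention to a neighborhood of a point at which $\check D_x^\rho\check H-\check D_t\check\eta\ne0$; on such a neighborhood the rank stabilizes and Frobenius applies. Combining the two cases then yields the uniform upper bound of~$\rho$ parameters asserted by the corollary.
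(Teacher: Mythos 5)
Your argument is correct in substance but takes a genuinely different route from the paper's. The paper proves the bound directly, with no case distinction and no Frobenius theorem: integrating the constraint $u_\rho=\check\eta$ with respect to~$x$ (with $t$ as a parameter) yields an ansatz of the form~\eqref{EqAnsatzForHigheeOrderRedOp}; substituting it into~$\mathcal E$ and solving for $\bar\varphi_t$ gives the system~\eqref{EqReducedSystemForHigherOrderAnsatz}, whose right hand sides $G^a(t,x,\bar\varphi)$ may still depend on~$x$; fixing a value $x=x_0$ then produces a determined system of $\rho$ first-order ordinary differential equations in~$t$, whose general solution $\bar\varphi=\bar\psi(t,\bar\varkappa)$ carries exactly $\rho$ constants, and every joint solution of $\mathcal E$ and the constraint is contained in the resulting family $u=F(t,x,\bar\psi(t,\bar\varkappa))$. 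This containment argument gives the upper bound uniformly, whether or not the constraint is compatible with~$\mathcal E$. Your dichotomy through Corollary~\ref{CorollaryOnCondSymsAndSolutionSets} buys the sharper conclusion that in the non-invariant case there are locally no joint solutions at all, at the price of invoking the involutivity machinery; the paper itself signals this alternative by its closing reference to the proof of that corollary.

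Two points in your case analysis need tightening. In the invariant case, the statement of Corollary~\ref{CorollaryOnCondSymsAndSolutionSets} only asserts the \emph{existence} of a $\rho$-parametric family of invariant solutions, which is a lower bound; for the upper bound you must use the assertion from its proof that the projection of the general solution of~\eqref{EqAssSysForInvolution} to~$v^0$ gives the \emph{general} joint solution. In the non-invariant case, your sentence that ``the maximal integral manifolds have dimension strictly less than $\rho$'' is dimensionally wrong: the leaves of the completed involutive distribution have dimension $d>2$ (possibly up to $\rho+2$), not less than $\rho$. The intended conclusion is reached more directly: a joint solution corresponds to a two-dimensional integral surface of $\{\check D_t,\check D_x\}$, along which the commutator $[\check D_t,\check D_x]=(\check D_x^\rho\check H-\check D_t\check\eta)\p_{v^{\rho-1}}$ must be tangent and hence lie in the span of $\check D_t$ and $\check D_x$, forcing $\check D_x^\rho\check H-\check D_t\check\eta=0$ there; so near a point where this expression is nonzero there are no joint solutions, and no appeal to constancy of the rank of the completed system is required.
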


\noprint{
\begin{proof}
Integrating the equation $u_\rho=\check\eta$, we construct an ansatz of the form~\eqref{EqAnsatzForHigheeOrderRedOp}. 
The substitution of the ansatz into~$\mathcal E$ results in the system~\eqref{EqReducedSystemForHigherOrderAnsatz}. 
We fix a value $x=x_0$. 
The system  $\varphi^a_t=G^a(t,x_0,\bar\varphi)$ is a well-determined system of $\rho$ ordinary differential equations in~$\bar\varphi$. 
Therefore, its general solution $\bar\varphi=\bar\psi(t,\bar\varkappa)$ is parameterized by $\rho$ arbitrary constants 
$\bar\varkappa=(\varkappa_1,\dots,\varkappa_\rho)$. 
Consider the $\rho$-parametric family~$\mathcal F$ of the functions $u=f(t,x,\bar\varkappa)$ 
with $f:=F(t,x,\bar\psi(t,\bar\varkappa))$. 
The set of joint solutions of the equations $u_\rho=\check\eta$ and~$\mathcal E$ is 
contained in~$\mathcal F$. 
Cf.\ also the proof of Corollary~\ref{CorollaryOnCondSymsAndSolutionSets}.
\end{proof}
}

\section{Reduction and conditional symmetry}\label{SectionOnReductionAndConditionalSymmetry}

In this section we discuss ansatzes for the unknown function~$u$, i.e., specific forms for finding families of solutions.  
We shall focus on the following class of (generalized) ansatzes:
\begin{equation}\label{EqAnsatzForHigheeOrderRedOp}
u=F(t,x,\bar\varphi(\omega)), \quad \bar\varphi=(\varphi^1,\dots,\varphi^\rho), 
\end{equation}
where $\varphi^1$, \dots, $\varphi^\rho$ are new unknown functions of the single invariant variable~$\omega=t$, 
$\det\Phi\ne0$. By $\Phi$ and $\hat\Phi$ we denote the matrices 
\begin{equation}\label{EqDefOfPhi}
\Phi=(\Phi^{ab})=\frac{\p(F_0,\dots,F_{\rho-1})}{\p(\varphi^1,\dots,\varphi^\rho)}=(F_{a-1,\varphi^b}), \quad
\hat\Phi=(\hat\Phi^{ab})=\Phi^{-1}.
\end{equation}
Here $F_{a-1}=\p^{a-1}F/\p x^{a-1}$ and $F_{a-1,\varphi^b}=\p^aF/\p x^{a-1}\p\varphi^b$.

Ansatzes $u=F^1(t,x,\bar\varphi^1(\omega))$ and $u=F^2(t,x,\bar\varphi^2(\omega))$ with the same number of new unknown functions 
and the same $\omega=t$ are called \emph{equivalent} 
if there exists a vector-function $\bar\zeta=\bar\zeta(t,\bar\varphi^1)$ invertible with respect to~$\bar\varphi^1$
such that 
$F^2(t,x,\bar\zeta(t,\bar\varphi^1))=F^1(t,x,\bar\varphi^1)$. 
This notion of equivalence can be extended, e.g., by permitting dependence of $\bar\varphi^1$ and $\bar\varphi^2$ on 
different arguments $\omega_1=\omega_1(t)$ and $\omega_2=\omega_2(t)$, respectively,
but we do not consider this possibility in order to retain the distinguished role of the variable~$t$ 
for evolution equations which is fundamental for the general line of argument in this paper.

\begin{lemma}\label{LemmaOnCorrespondenceBetweenGenAnsatzesAndHigherOrderOps}
Up to the equivalence of ansatzes, for any fixed~$\rho$ there exists a bijection between 
operators of the form~\eqref{EqHigheeOrderOpInCanonicalEvolForm} and ansatzes of the form~\eqref{EqAnsatzForHigheeOrderRedOp}.
\end{lemma}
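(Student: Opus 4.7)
The plan is to set up explicit maps in both directions, between canonical operators $\check Q=(u_\rho-\check\eta(t,x,u_{(\rho-1,x)}))\p_u$ and equivalence classes of ansatzes of the form~\eqref{EqAnsatzForHigheeOrderRedOp}, and then to verify that they are mutually inverse.

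For the direction from ansatzes to operators, I would start with $u=F(t,x,\bar\varphi(t))$ with $\det\Phi\ne0$ and differentiate with respect to $x$, obtaining the system $u_{a-1}=F_{a-1}(t,x,\bar\varphi)$ for $a=1,\dots,\rho$. The matrix of partial derivatives of this system in $\bar\varphi$ is precisely $\Phi$, so $\det\Phi\ne0$ and the implicit function theorem yield a local solution $\bar\varphi=\bar\psi(t,x,u_{(\rho-1,x)})$. Substituting into $u_\rho=F_\rho(t,x,\bar\varphi)$ produces
\[
\check\eta(t,x,u_{(\rho-1,x)}):=F_\rho(t,x,\bar\psi(t,x,u_{(\rho-1,x)})),
\]
and hence the canonical operator $\check Q$. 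To see the construction descends to equivalence classes, I would take a second ansatz with $F^2(t,x,\bar\zeta(t,\bar\varphi^1))=F^1(t,x,\bar\varphi^1)$ for some $\bar\zeta$ invertible in its second argument; differentiating in $x$ shows that the associated implicit solutions satisfy $\bar\zeta(t,\bar\psi^1)=\bar\psi^2$, and substituting into $F_\rho^1$, $F_\rho^2$ yields the same $\check\eta$.

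For the opposite direction, I would treat $u_\rho=\check\eta(t,x,u_{(\rho-1,x)})$ as an ordinary differential equation of order $\rho$ in $x$ with $t$ as a passive parameter. Near a regular point its general solution takes the form $u=F(t,x,\bar\varphi)$ with $\rho$ integration constants $\bar\varphi$; promoting these constants to smooth functions of $t$ yields an ansatz of the required form. Fixing any base point $x_0$, the matrix $\Phi$ evaluated at $x_0$ is the Jacobian of the parameters-to-initial-data map $\bar\varphi\mapsto(u_0,\dots,u_{\rho-1})|_{x=x_0}$, which is nonsingular because $F(t,x,\cdot)$ parameterizes the full local solution space.

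Finally I would check that these two passages invert one another modulo the stated equivalence. One direction is immediate: substituting $u=F(t,x,\bar\varphi)$ back into $u_\rho=\check\eta$ recovers the defining identity by construction. The other relies on ODE uniqueness: two $\rho$-parameter families $F^1$ and $F^2$ parameterizing the same local solution set of $u_\rho=\check\eta$ must be related through the composition of one parameters-to-initial-data diffeomorphism with the inverse of the other, and this composition has precisely the form $\bar\varphi^2=\bar\zeta(t,\bar\varphi^1)$ required by the equivalence relation. The main technical obstacle is the purely local nature of both maps: one has to select a neighbourhood in which $\det\Phi\ne0$ and a base point $x_0$ at which initial data are read off, and verify that the equivalence class of the resulting ansatz is independent of these choices. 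This independence reduces to transitivity of the matching of initial data at different base points along the solutions of the ODE, which is exactly standard ODE well-posedness, in agreement with the local setting adopted throughout the paper.
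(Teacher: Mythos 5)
Your proposal is correct and follows essentially the same route as the paper: one direction reconstructs $\check\eta$ by inverting the system $u_{a-1}=F_{a-1}$ via $\det\Phi\ne0$, and the other represents the general solution of the ODE $u_\rho=\check\eta$ in $x$ (with $t$ a parameter) as an ansatz, with equivalence of ansatzes corresponding to reparametrizations of that general solution. You merely spell out the details the paper leaves implicit (descent to equivalence classes, nondegeneracy of $\Phi$ via the parameters-to-initial-data map, and the mutual inversion through ODE uniqueness), all of which are sound.
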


\begin{proof}
An ansatz constructed with an operator~$Q$ of the form~\eqref{EqHigheeOrderOpInCanonicalEvolForm} 
is a representation of the general solution of the ordinary differential equation $u_\rho=\check\eta(t,x,u_{(\rho-1,x)})$ 
(with $t$ playing the role of a parameter) and, therefore, has the form~\eqref{EqAnsatzForHigheeOrderRedOp}. 
Equivalent ansatzes only amount to  different representations of the general solution. 
(This in fact is the reason for our notion of equivalence of ansatzes.)

The function~$\check\eta$ from the constraint corresponding to an ansatz of the form~\eqref{EqAnsatzForHigheeOrderRedOp} 
can be calculated by the standard method of reconstructing the right hand side of an ordinary differential equation from 
its general solution. 
Namely, differentiating the ansatz with respect to~$x$ up to order~$\rho-1$ and solving the resulting system $u_{a-1}=F_{a-1}$ 
with respect to $\bar\varphi$, we obtain expressions for $\bar\varphi$ as a function of $t$, $x$ and $u_{(\rho-1,x)}$:
$\varphi^a=\mathcal I^a(t,x,u_{(\rho-1,x)})$. 
(This is possible since $\det\Phi\ne0$.) 
Then the ansatz corresponds to the constraint $u_\rho=\check\eta$, where 
$
\check\eta=F_\rho(t,x,\bar\varphi)\big|_{\varphi^a=\mathcal I^a(t,x,u_{(\rho-1,x)})}.
$
\end{proof}

The \emph{reduction procedure} with ansatz~\eqref{EqAnsatzForHigheeOrderRedOp} is implemented in the following way. 
The substitution of~\eqref{EqAnsatzForHigheeOrderRedOp} into~$\mathcal E$ gives the equation $F_t+F_{\varphi^a}\varphi^a_t=\tilde H$, 
where $\tilde H=\tilde H(t,x,\bar\varphi)=H(t,x,F_{(r,x)}(t,x,\bar\varphi))$. 
We differentiate this equation with respect to~$x$ up to order~$\rho-1$ and solve the system so obtained with respect to $\bar\varphi_t$ 
(which is possible since $\det\Phi\ne0$). 
This procedure results in the system 
\begin{equation}\label{EqReducedSystemForHigherOrderAnsatz}
\varphi^a_t=G^a:=\hat\Phi^{ab}(\tilde H-F_t)_{b-1}.
\end{equation}
In general, the right hand sides~$G^a$ of the equations of this system will be functions of~$t$, $x$ and~$\bar\varphi$.

\begin{definition}\label{DefinitionOfHigherOrderReduction}
If all the functions~$G^a$ are independent of~$x$, 
the system $\varphi^a_t=G^a(t,\bar\varphi)$ is a well-determined system of ordinary differential equations in~$\bar\varphi$, 
which is called the \emph{reduced system} associated with the equation~$\mathcal E$ and ansatz~\eqref{EqAnsatzForHigheeOrderRedOp}. 
In this case we say that the ansatz~\eqref{EqAnsatzForHigheeOrderRedOp} reduces the equation~$\mathcal E$.   
\end{definition}

\begin{remark*}
There also exists another notion of reduction in which a split with respect to the independent variables 
complementary to the invariant ones is possible 
after substituting ansatzes into the initial equations~\cite{Olver&Rosenau1987}. 
This kind of reduction is connected with the notion of weak symmetry~\cite{Olver&Rosenau1987,Saccomandi2004} 
and may be called \emph{weak reduction}. 
In contrast to it, Definition~\ref{DefinitionOfHigherOrderReduction} gives a special case of  
the general notion of reduction which does not involve a split~\cite{Olver1994,Zhdanov1995}. 
It generalizes the classical Lie reduction based on Lie symmetries \cite{Olver1993,Ovsiannikov1982}
and the reduction procedures related to nonclassical~\cite{Bluman&Cole1969,Zhdanov&Tsyfra&Popovych1999} 
and generalized~\cite[section 18.2]{Ibragimov1985} symmetries.  
\end{remark*}

To allow for a smooth presentation of the subsequent results we now introduce some
notions related to parametric families of functions and prove some auxiliary statements.

\begin{definition}\label{DefinitionOfEssentialParametersInFamiliesOfFunctions}
The parameters $\varkappa_1$, \dots, $\varkappa_\rho$ are \emph{essential} in a parametric family $\{f(t,x,\bar\varkappa)\}$ 
of functions of~$t$ and~$x$ 
if there do not exist a function~$\tilde f$ of $\tilde\rho+2$ arguments, where $\tilde\rho<\rho$, 
and functions $\zeta^1$,~\dots, $\zeta^{\tilde\rho}$ of $\bar\varkappa$
such that  $f(t,x,\bar\varkappa)=\tilde f(t,x,\zeta^1(\bar\varkappa),\dots,\zeta^{\tilde\rho}(\bar\varkappa))$.
\end{definition}

\begin{lemma}\label{LemmaCriterionOnEssentialParametersOfSolutionsOfEvolEqs}
Let $\mathcal F=\{u=f(t,x,\bar\varkappa)\}$ be a parametric family of solutions of~$\mathcal E$. 
All the parameters $\varkappa_1$, \dots, $\varkappa_\rho$ are essential in~$\mathcal F$ 
(i.e., $\mathcal F$ is indeed a $\rho$-parametric family) if and only if
\begin{equation}\label{EqCriterionOnEssentialParametersOfSolutionsOfEvolEqs}
\det\frac{\p(f_0,\dots,f_{\rho-1})}{\p(\varkappa_1,\dots,\varkappa_\rho)}\ne0. 
\end{equation}
\end{lemma}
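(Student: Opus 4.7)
I would prove the two implications by contraposition.

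For ``$\det M\equiv 0$ whenever the parameters are not essential'', suppose $f=\tilde f(t,x,\zeta^1(\bar\varkappa),\dots,\zeta^{\tilde\rho}(\bar\varkappa))$ with $\tilde\rho<\rho$. The chain rule gives $\p f_{a-1}/\p\varkappa_b=\sum_k(\p\tilde f_{a-1}/\p\zeta^k)(\p\zeta^k/\p\varkappa_b)$, exhibiting $M$ as a product of a $\rho\times\tilde\rho$ and a $\tilde\rho\times\rho$ matrix; hence $\rank M\leqslant\tilde\rho<\rho$ everywhere and $\det M\equiv 0$.

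The substantive content is the converse: $\det M\equiv 0$ implies the parameters are not essential. The plan is to construct a nonzero vector field $v=v^b(\bar\varkappa)\p_{\varkappa_b}$ on the parameter space alone with $v(f)\equiv 0$; the integral curves of $v$ then furnish a local factorization $f=\tilde f(t,x,\zeta(\bar\varkappa))$ with $\zeta\colon\mathbb R^\rho\to\mathbb R^{\rho-1}$, ruling out essentiality. Set $w_b:=\p f/\p\varkappa_b$. Differentiating $f_t=H(t,x,f_{(r,x)})$ in $\varkappa_b$ shows each $w_b$ satisfies the linearized evolution equation $\p_t w=\sum_{k=0}^r H_{u_k}(t,x,f_0,\dots,f_r)\,\p_x^k w$, and by construction $M_{ab}=D_x^{a-1}w_b$ is the $x$-Wronskian of $w_1,\dots,w_\rho$. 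The assumption $\det M\equiv 0$ then provides a smooth, nonzero $v(t,x,\bar\varkappa)$ with $Mv=0$.

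To strip $v$ of its $x$-dependence, I would differentiate $\sum_b v^b D_x^{a-1}w_b=0$ in $x$: for $a\leqslant\rho-1$ the ``extra'' term $\sum_b v^b D_x^a w_b$ equals $(Mv)_{a+1}=0$, leaving $\sum_b v^b_x D_x^{a-1}w_b=0$. Hence $v_x$ lies in the kernel of the top $\rho-1$ rows of $M$, which generically (when $\rank M=\rho-1$) is the one-dimensional line through $v$; so $v_x=\lambda v$ for some scalar $\lambda$, and multiplying $v$ by $\exp\bigl(-\int\lambda\,dx\bigr)$ makes $v$ independent of $x$. To strip the $t$-dependence, apply $\p_t$ to $\sum_b v^b w_b=0$: the term $\sum_b v^b w_{b,t}=\sum_k H_{u_k}\p_x^k\bigl(\sum_b v^b w_b\bigr)=0$ vanishes by the linearized equation together with the freshly acquired $x$-independence of $v$, so $\sum_b v^b_t w_b=0$; propagating this by $\p_x$ gives $Mv_t=0$, so $v_t=\mu v$ and a further rescaling yields $v=v(\bar\varkappa)$. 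The first row of $Mv=0$ then reads $\sum_b v^b(\bar\varkappa)\p f/\p\varkappa_b=0$, i.e.\ $v(f)\equiv 0$, completing the argument.

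The main obstacle is precisely this pair of rescalings that converts the a priori $(t,x,\bar\varkappa)$-dependent kernel vector into a $\bar\varkappa$-only one; it is there that the hypothesis that $f$ genuinely solves $\mathcal E$ enters, through the linearized equation satisfied by the $w_b$. The degenerate situation $\rank M=\rho-p$ with $p\geqslant 2$ is not covered by the rank-one kernel argument and must be handled by iterating the construction, peeling off one essential parameter at a time.
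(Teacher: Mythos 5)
Your proof is correct (in the local, generic-point setting in which the whole paper works) but takes a genuinely different route from the paper's. The paper argues by contradiction: if the determinant vanishes, then $t$, $x$, $f_0$, \dots, $f_{\rho-1}$ are functionally dependent, so the entire family satisfies a lower-order constraint $f_{\rho'}=\eta'(t,x,f_{(\rho'-1,x)})$ with $\rho'<\rho$, and Corollary~\ref{CorollaryOnNumberOfParametersInGenSolutionsOfEvolEqsAndODEwrtX} (which rests on the involutivity/Frobenius analysis of Section~\ref{SectionOnRelationToInvolutionOfVectorFields}) then caps the number of essential parameters at $\rho'$. You instead linearize $\mathcal E$ along the family and run a Wronskian-type argument on $w_b=f_{\varkappa_b}$ to manufacture a $(t,x)$-independent kernel vector field on parameter space; this is self-contained and bypasses Corollary~\ref{CorollaryOnNumberOfParametersInGenSolutionsOfEvolEqsAndODEwrtX} entirely, at the cost of constant-rank bookkeeping. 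Two remarks on that bookkeeping. First, your worry that $\ker$ of the truncated matrix $M'$ might exceed the line through $v$ does not materialize on the open dense set where the ranks of all truncated Wronskian matrices are locally constant: if row $j$ of $M$ lies in the span of rows $1,\dots,j-1$ (with smooth coefficients), then differentiating that dependence in $x$ shows all subsequent rows do too, so the row space of $M$ is spanned by its first $\rank M$ rows and $\ker M'=\ker M$ automatically when $\rank M=\rho-1$. Second, the degenerate case $\dim\ker M=p\geqslant2$ does not actually require iteration: the lemma only claims that not all parameters are essential, so a single $(t,x)$-independent kernel vector suffices, and the same row-space observation shows that $g_a=\sum_b v_0^b\,\p_x^{a-1}w_b$, $a=1,\dots,\rho-p$, satisfies homogeneous linear first-order systems in $x$ and in $t$ for any constant $v_0$, whence $\ker M$ itself is independent of $(t,x)$ on the good open set. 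Both proofs lean on implicit genericity (the paper needs to solve its functional relation for a top derivative); the paper's version is shorter only because Corollary~\ref{CorollaryOnNumberOfParametersInGenSolutionsOfEvolEqsAndODEwrtX} has already absorbed the analytic work.
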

\begin{proof}
Suppose to the contrary that all the parameters in~$\mathcal F$ are essential but 
condition~\eqref{EqCriterionOnEssentialParametersOfSolutionsOfEvolEqs} in not satisfied.
The latter implies that the values $t$, $x$, $f_0$, \dots, $f_{\rho-1}$ are functionally dependent. 
Thus there exists $\rho'$ and a function $\eta'$ of $\rho'+2$ variables such that 
$\rho'<\rho$ and $f_{\rho'}=\eta'(t,x,f_{(\rho'-1,x)})$. 
This means that any solution of~$\mathcal E$ from the family~$\mathcal F$ also is a solution of 
the equation $u_{\rho'}=\eta'(t,x,u_{(\rho'-1,x)})$. 
Therefore, in view of Corollary~\ref{CorollaryOnNumberOfParametersInGenSolutionsOfEvolEqsAndODEwrtX} 
the number of essential parameters of~$\mathcal F$ is not greater than~$\rho'$, contradicting our assumption. 

Conversely, if some of the parameters $\varkappa_1$, \dots, $\varkappa_\rho$ are inessential in~$\mathcal F$ 
then the determinant from~\eqref{EqCriterionOnEssentialParametersOfSolutionsOfEvolEqs} must obviously vanish.
\end{proof}

Roughly speaking, the parameters in families of solutions of evolution equations are essential 
if and only if they are essential with respect to~$x$. 
This provides further evidence that 
$(1+1)$-dimensional evolution equations are closely related to ordinary differential equations 
and in various aspects the variable~$t$ plays the role of a parameter. 

\begin{definition}\label{DefOnEquivParamFamiliesOfSolutions}
Families $\{u=f(t,x,\bar\varkappa)\}$ and $\{u=\tilde f(t,x,\bar\varkappa')\}$ of functions with the same number of parameters 
are defined to be \emph{equivalent} if they consist of the same functions and differ only by parameterizations, i.e., 
if there exists an invertible vector-function $\bar\zeta=\bar\zeta(\bar\varkappa)$ such that 
$\tilde f(t,x,\bar\zeta(\bar\varkappa))=f(t,x,\bar\varkappa)$.  
\end{definition}

Now we present the main statements of this section. 

\begin{theorem}\label{TheoremOnAnsatzesAndSolutionSetsOfEvolEqs}
Up to the re-parametrization equivalence of solution families and the equivalence of ansatzes, 
for any equation of the form~\eqref{EqGenEvolEq} 
there exists a one-to-one correspondence between $\rho$-parametric families of its solutions 
and ansatzes reducing this equation.
\end{theorem}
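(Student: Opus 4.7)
The plan is to exhibit explicit maps in both directions between reducing ansatzes of the form~\eqref{EqAnsatzForHigheeOrderRedOp} and $\rho$-parametric families of solutions of~$\mathcal E$, and then to verify that these maps are mutually inverse modulo the two stated notions of equivalence.

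For the forward direction (ansatz~$\to$~family), starting from a reducing ansatz $u=F(t,x,\bar\varphi(t))$, I apply the reduction procedure~\eqref{EqReducedSystemForHigherOrderAnsatz} to obtain the well-defined ODE system $\varphi^a_t=G^a(t,\bar\varphi)$, and take $\bar\varphi=\bar\psi(t,\bar\varkappa)$ to be its general solution parameterized by $\rho$ constants of integration. Setting $f(t,x,\bar\varkappa):=F(t,x,\bar\psi(t,\bar\varkappa))$ yields a $\rho$-parametric family of solutions of~$\mathcal E$. Essentiality of the parameters follows from Lemma~\ref{LemmaCriterionOnEssentialParametersOfSolutionsOfEvolEqs}: by the chain rule $\p f_{a-1}/\p\varkappa_b=\Phi^{ac}(\p\psi^c/\p\varkappa_b)$, so the relevant determinant factors as $(\det\Phi)\cdot\det(\p\bar\psi/\p\bar\varkappa)$, both factors nonzero (the second by the standard dependence of the ODE flow on initial data).

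For the reverse direction (family~$\to$~ansatz), starting from a $\rho$-parametric family $\{u=f(t,x,\bar\varkappa)\}$ of solutions with essential parameters, I simply set $F(t,x,\bar\varphi):=f(t,x,\bar\varphi)$ and $\omega=t$. Lemma~\ref{LemmaCriterionOnEssentialParametersOfSolutionsOfEvolEqs} ensures that $\det\Phi\ne0$, so~\eqref{EqAnsatzForHigheeOrderRedOp} is a legitimate ansatz. The key observation is that since $f_t(t,x,\bar\varkappa)=H(t,x,f_{(r,x)}(t,x,\bar\varkappa))$ holds identically in~$\bar\varkappa$, substitution of $u=F(t,x,\bar\varphi(t))$ into~$\mathcal E$ forces $F_{\varphi^a}\varphi^a_t=0$; differentiating in~$x$ up to order~$\rho-1$ and inverting~$\Phi$ produces the trivial reduced system $\varphi^a_t=0$, which is vacuously $x$-independent, so this ansatz reduces $\mathcal E$ in the sense of Definition~\ref{DefinitionOfHigherOrderReduction}.

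For mutual inversion, the composition family~$\to$~ansatz~$\to$~family returns the original family because the trivial system $\varphi^a_t=0$ has general solution $\varphi^a=\varkappa_a$. Conversely, ansatz~$\to$~family~$\to$~ansatz produces $\tilde F(t,x,\bar\varphi)=F(t,x,\bar\psi(t,\bar\varphi))$, which is equivalent to~$F$ as an ansatz via the re-parametrization $\bar\zeta(t,\bar\varphi):=\bar\psi^{-1}(t,\cdot\,)(\bar\varphi)$; here it is essential that ansatz equivalence permits $t$-dependence in~$\bar\zeta$, matching the $t$-dependence of the ODE flow~$\bar\psi$. I expect the main technical obstacle to be precisely this last matching: tracking how the two distinct non-degeneracy conditions (essential parameters versus $\det\Phi\ne0$) and the two different equivalence relations interact under both maps, and verifying that equivalent inputs produce equivalent outputs so that the correspondence descends cleanly to equivalence classes.
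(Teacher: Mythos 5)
Your proposal is correct and follows essentially the same route as the paper's own proof: the forward map via the general solution $\bar\psi(t,\bar\varkappa)$ of the reduced system with essentiality from the determinant factorization $\det(f_{a-1,\varkappa_b})=\det\Phi\,\det(\psi^a_{\varkappa_b})$, and the reverse map via Lemma~\ref{LemmaCriterionOnEssentialParametersOfSolutionsOfEvolEqs} reducing $\mathcal E$ to the trivial system $\varphi^a_t=0$. Your explicit check that the two constructions are mutually inverse modulo the equivalences is a welcome elaboration of what the paper leaves implicit, but it does not change the argument.
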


\begin{proof}
Suppose that an ansatz of the form~\eqref{EqAnsatzForHigheeOrderRedOp} reduces $\mathcal E$. 
Since the reduced system is a normal system of $\rho$ first-order ordinary differential equations in~$\varphi$, 
its general solution can be represented in the form 
$\bar\varphi=\bar\psi(\omega,\bar\varkappa)$, 
where $\bar\varkappa=(\varkappa_1,\dots,\varkappa_\rho)$ are arbitrary constants 
and $\det(\psi^a_{\varkappa_b})\ne0$. 
This representation is unique up to re-parametrization. 
Substituting this solution into the ansatz results in  
the $\rho$-parametric family~$\mathcal F$ of solutions $u=f(t,x,\bar\varkappa)$ of~$\mathcal E$ 
with $f=F(t,x,\bar\psi(t,\bar\varkappa))$. 
All the parameters $\varkappa_1$, \dots, $\varkappa_\rho$ are essential in~$\mathcal F$ by the chain rule since 
\begin{equation}\label{EqEssentialConstansAfterSubstitutionToAnsatz}
\det(f_{a-1,\varkappa_b})=\det(F_{a-1,\varphi^{b'}}\psi^{b'}_{\varkappa_b})\big|_{\bar\varphi=\bar\psi(t,\bar\varkappa)}=
\det\Phi\big|_{\bar\varphi=\bar\psi(t,\bar\varkappa)}\det(\psi^a_{\varkappa_b})\ne0.
\end{equation}

Conversely, let $\mathcal F=\{u=f(t,x,\bar\varkappa)\}$ be a $\rho$-parametric family of solutions of~$\mathcal E$. 
In view of Lemma~\ref{LemmaCriterionOnEssentialParametersOfSolutionsOfEvolEqs} 
the expression $u=f(t,x,\bar\varphi(\omega))$, where $\omega=t$, defines an ansatz for~$u$. 
This ansatz reduces~$\mathcal E$ to the system $\varphi^a_\omega=0$. 
Indeed, after substituting the ansatz into~$\mathcal E$ we obtain 
\begin{equation}\label{EqReductionWithAnsatzCorrToSolutionSet}
\bigl(f_t+f_{\varkappa_a}\varphi^a_t-H(t,x,f_{(r,x)})\bigr)\big|_{\bar\varkappa=\bar\varphi(t)}
=f_{\varkappa_a}\big|_{\bar\varkappa=\bar\varphi(t)}\varphi^a_t=0
\end{equation}
since $f_t=H(t,x,f_{(r,x)})$. 
We differentiate the last equality in~\eqref{EqReductionWithAnsatzCorrToSolutionSet} 
with respect to~$x$ up to order~$\rho-1$ and solve the resulting system with respect to $\bar\varphi_t$. 
This system has only the zero solution since $\det(f_{a-1,\varkappa_b})\ne0$. 
\end{proof}

\begin{theorem}\label{TheoremOnAnsatzesAndReductionOperatorsOfEvolEqs}
A $(1+1)$-dimensional evolution equation~$\mathcal E$ is conditionally invariant with respect to 
a $\rho$th-order evolution vector field~$Q$ in reduced form if and only if 
an ansatz constructed with~$Q$ reduces the equation~$\mathcal E$ to a normal system of $\rho$ first-order ordinary differential equations
in the $\rho$ new unknown functions~$\varphi^1$,~\dots,~$\varphi^\rho$.
\end{theorem}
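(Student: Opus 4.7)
The plan is to deduce this theorem by combining the three previous results: Lemma~\ref{LemmaOnCorrespondenceBetweenGenAnsatzesAndHigherOrderOps} (which identifies operators $Q$ of the form~\eqref{EqHigheeOrderOpInCanonicalEvolForm} with ansatzes~\eqref{EqAnsatzForHigheeOrderRedOp} up to equivalence), Corollary~\ref{CorollaryOnCondSymsAndSolutionSets} (conditional invariance with respect to a $\rho$th-order $Q$ is equivalent to the existence of a $\rho$-parametric family of $Q$-invariant solutions) and Theorem~\ref{TheoremOnAnsatzesAndSolutionSetsOfEvolEqs} (the correspondence between $\rho$-parametric solution families and reducing ansatzes). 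Up to equivalence of generalized conditional symmetries on solutions of~$\mathcal E$ we may assume $Q$ is in the canonical form~\eqref{EqHigheeOrderOpInCanonicalEvolForm}, so that the constraint $u_\rho=\check\eta(t,x,u_{(\rho-1,x)})$ is an ordinary differential equation in~$x$ whose general solution defines a unique (up to equivalence) ansatz of the form~\eqref{EqAnsatzForHigheeOrderRedOp}. A function of the form~\eqref{EqAnsatzForHigheeOrderRedOp} is $Q$-invariant precisely because it solves that ODE.

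For the forward direction, assume $\mathcal E$ is conditionally invariant with respect to~$Q$. By Corollary~\ref{CorollaryOnCondSymsAndSolutionSets} there is a $\rho$-parametric family $\mathcal F=\{u=f(t,x,\bar\varkappa)\}$ of $Q$-invariant solutions of~$\mathcal E$. The $Q$-invariance forces every member of~$\mathcal F$ to satisfy $u_\rho=\check\eta$, hence $f$ coincides, up to a re-parametrization $\bar\varphi=\bar\psi(t,\bar\varkappa)$, with $F(t,x,\bar\varphi)$ evaluated on a particular solution of the ODE in~$x$; this is exactly the ansatz associated to~$Q$ by Lemma~\ref{LemmaOnCorrespondenceBetweenGenAnsatzesAndHigherOrderOps}. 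By the construction in the proof of Theorem~\ref{TheoremOnAnsatzesAndSolutionSetsOfEvolEqs}, applying the ansatz to~$\mathcal E$ produces a system whose right hand sides~$G^a$ are independent of~$x$, i.e., a normal system of $\rho$ first-order ODEs in~$\bar\varphi$.

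For the converse, assume the ansatz built from~$Q$ reduces~$\mathcal E$ to a normal system $\varphi^a_t=G^a(t,\bar\varphi)$. Its general solution $\bar\varphi=\bar\psi(t,\bar\varkappa)$ depends on $\rho$ arbitrary constants with $\det(\psi^a_{\varkappa_b})\ne0$, and substitution into the ansatz yields a family $\mathcal F=\{u=f(t,x,\bar\varkappa)\}$ of solutions of~$\mathcal E$. The computation~\eqref{EqEssentialConstansAfterSubstitutionToAnsatz} shows that all $\rho$ parameters are essential, so $\mathcal F$ is genuinely $\rho$-parametric; moreover every member of~$\mathcal F$ satisfies $u_\rho=\check\eta$ by construction of the ansatz, hence is $Q$-invariant. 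Corollary~\ref{CorollaryOnCondSymsAndSolutionSets} then gives the conditional invariance of~$\mathcal E$ with respect to~$Q$.

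The main obstacle I anticipate is verifying carefully that the ansatz in the forward direction is, up to the equivalence of ansatzes introduced before Lemma~\ref{LemmaOnCorrespondenceBetweenGenAnsatzesAndHigherOrderOps}, precisely the one associated with~$Q$, rather than merely some ansatz that happens to reduce~$\mathcal E$. This reduces to invoking the uniqueness part of Lemma~\ref{LemmaOnCorrespondenceBetweenGenAnsatzesAndHigherOrderOps}: both ansatzes parametrize the general solution of the same ODE $u_\rho=\check\eta$ in~$x$, and any two such parametrizations differ only by a $\bar\zeta(t,\bar\varphi)$ invertible in~$\bar\varphi$, which is exactly the equivalence of ansatzes. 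Everything else is a transcription of the statements that have already been proved.
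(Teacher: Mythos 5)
Your proposal is correct and follows essentially the same route as the paper: both directions are obtained by combining Corollary~\ref{CorollaryOnCondSymsAndSolutionSets}, Theorem~\ref{TheoremOnAnsatzesAndSolutionSetsOfEvolEqs} and Lemma~\ref{LemmaOnCorrespondenceBetweenGenAnsatzesAndHigherOrderOps} in exactly the way the paper does. The only difference is that you make explicit the identification (up to equivalence of ansatzes) of the ansatz built from the $Q$-invariant solution family with the ansatz associated to~$Q$, a step the paper leaves implicit in the phrase ``defines an ansatz for~$u$ associated with~$Q$''.
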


\begin{proof}
Suppose that the equation~$\mathcal E$ is conditionally invariant with respect to the vector field~$Q$. 
In view of Corollary~\ref{CorollaryOnCondSymsAndSolutionSets}, 
the equation~$\mathcal E$ possesses a $\rho$-parametric family $\{u=f(t,x,\bar\varkappa)\}$ of $Q$-invariant solutions. 
Then the expression $u=f(t,x,\bar\varphi(\omega))$, where $\omega=t$, defines an ansatz for~$u$ 
associated with~$Q$ and reducing the equation~$\mathcal E$, cf.\ the proof of Theorem~\ref{TheoremOnAnsatzesAndSolutionSetsOfEvolEqs}.

Conversely, suppose that an ansatz of the form~\eqref{EqAnsatzForHigheeOrderRedOp} reduces the equation~$\mathcal E$. 
Let $Q$ be the operator of the form~\eqref{EqHigheeOrderOpInCanonicalEvolForm} associated with this ansatz. 
Such an operator always exists (cf. Lemma~\ref{LemmaOnCorrespondenceBetweenGenAnsatzesAndHigherOrderOps}). 
In view of Theorem~\ref{TheoremOnAnsatzesAndSolutionSetsOfEvolEqs}
the ansatz gives a $\rho$-parametric family $\mathcal F$ of joint solutions of the equations~$\mathcal E$ and $\mathcal Q$. 
Then Corollary~\ref{CorollaryOnCondSymsAndSolutionSets} implies that 
the equation~$\mathcal E$ is conditionally invariant with respect to the operator $Q$. 
\end{proof}

\begin{corollary}\label{CorollaryOnSolutionSetsAndReductionOperatorsOfEvolEqs1}
Up to the re-parametrization equivalence of solution families, for any equation of the form~\eqref{EqGenEvolEq} 
there exists a one-to-one correspondence between $\rho$-parametric families of its solutions 
and canonical $\rho$th-order conditional symmetry operators.
Namely, each operator of this kind corresponds to
the family of solutions which are invariant with respect to this operator. 
The problems of the construction of all $\rho$-parametric solution families of equation~\eqref{EqGenEvolEq} 
and the exhaustive description of its canonical $\rho$th-order conditional symmetry operators
are completely equivalent.
\end{corollary}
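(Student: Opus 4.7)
The plan is to obtain the claimed bijection by composing the three correspondences already established in this section. Specifically, Lemma~\ref{LemmaOnCorrespondenceBetweenGenAnsatzesAndHigherOrderOps} yields a bijection (modulo the equivalence of ansatzes) between canonical $\rho$th-order operators of the form~\eqref{EqHigheeOrderOpInCanonicalEvolForm} and ansatzes of the form~\eqref{EqAnsatzForHigheeOrderRedOp}; Theorem~\ref{TheoremOnAnsatzesAndReductionOperatorsOfEvolEqs} identifies, within this correspondence, the canonical conditional symmetries of~$\mathcal E$ with those operators whose associated ansatz reduces~$\mathcal E$; and Theorem~\ref{TheoremOnAnsatzesAndSolutionSetsOfEvolEqs} gives a bijection (modulo re-parametrization) between reducing ansatzes for~$\mathcal E$ and $\rho$-parametric families of its solutions. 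Concatenating these three correspondences produces a bijection between canonical $\rho$th-order conditional symmetry operators of~$\mathcal E$ (modulo nothing, since canonical form fixes the representative) and $\rho$-parametric solution families (modulo re-parametrization).

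What remains after the bijection is set up is to verify the \emph{namely} clause, that the family produced from a canonical operator~$Q$ consists of $Q$-invariant solutions. This is read off from Corollary~\ref{CorollaryOnCondSymsAndSolutionSets}: integrating the reduced ODE system associated with the $Q$-ansatz and substituting into the ansatz produces the general solution of the joint system~$\mathcal E \cup \{u_\rho=\check\eta\}$, and these are by definition the $Q$-invariant solutions. Conversely, any $\rho$-parametric family of solutions of~$\mathcal E$ produces, via the ansatz of Theorem~\ref{TheoremOnAnsatzesAndSolutionSetsOfEvolEqs} and then Lemma~\ref{LemmaOnCorrespondenceBetweenGenAnsatzesAndHigherOrderOps}, a canonical operator~$Q$ all of whose invariant solutions contain the family, and in fact \emph{equal} it by Corollary~\ref{CorollaryOnCondSymsAndSolutionSets} (since $Q$-invariant solutions already form a $\rho$-parametric family).

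The main point requiring care is bookkeeping across the two equivalence relations: verifying that the round trip \textbf{operator} $\to$ \textbf{ansatz} $\to$ \textbf{family} $\to$ \textbf{ansatz} $\to$ \textbf{operator} lands back on the starting canonical operator, and symmetrically for families. For the first direction, one notes that the ansatz rebuilt from the family of $Q$-invariant solutions, as in the proof of Theorem~\ref{TheoremOnAnsatzesAndSolutionSetsOfEvolEqs}, is by construction a re-parameterization of the ansatz originally associated with~$Q$, hence equivalent to it, and the operator reconstructed by the standard right-hand-side-from-general-solution procedure in the proof of Lemma~\ref{LemmaOnCorrespondenceBetweenGenAnsatzesAndHigherOrderOps} is therefore the same canonical~$Q$. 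The reverse round trip is handled analogously, with the essentiality of parameters guaranteed by Lemma~\ref{LemmaCriterionOnEssentialParametersOfSolutionsOfEvolEqs} and the identity~\eqref{EqEssentialConstansAfterSubstitutionToAnsatz}. The asserted equivalence of the two algorithmic problems at the end of the corollary is then an immediate reformulation of the established bijection.
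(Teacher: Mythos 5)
Your proof is correct and follows essentially the same route as the paper: the paper's own argument is precisely to combine Theorem~\ref{TheoremOnAnsatzesAndSolutionSetsOfEvolEqs} and Theorem~\ref{TheoremOnAnsatzesAndReductionOperatorsOfEvolEqs} (with Lemma~\ref{LemmaOnCorrespondenceBetweenGenAnsatzesAndHigherOrderOps} supplying the operator--ansatz link), noting that each solution built from the ansatz is invariant under the associated canonical operator. Your additional round-trip bookkeeping is a more explicit version of what the paper leaves implicit, and it is sound.
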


\begin{proof}
It is enough to combine Theorems~\ref{TheoremOnAnsatzesAndSolutionSetsOfEvolEqs} and~\ref{TheoremOnAnsatzesAndReductionOperatorsOfEvolEqs}.
Each solution constructed with an ansatz of the form~\eqref{EqAnsatzForHigheeOrderRedOp} is invariant with respect to 
the canonical $\rho$th-order reduction operator of~$\mathcal E$ associated with the ansatz.  
\end{proof}

\begin{example}\label{ExampleOnSL2InvVCDCEq}
Analyzing the results from~\cite{Ivanova&Popovych&Sophocleous2007Part2}
on the group classification of $(1+1)$-dimensional variable-coefficient nonlinear diffusion--convection equations of 
the general form 
\[f(x)u_t=(g(x)A(u)u_x)_x+h(x)B(u)u_x,\] 
where $f(x)g(x)A(u)\neq0$, 
we obtain only one essentially variable-coefficient equation 
\begin{equation}\label{EqA-65B1fx2hx2} 
x^2u_t=(u^{-6/5}u_x)_x+x^2u_x  
\end{equation} 
which is invariant with respect to a realization of the algebra~$\mathrm{sl}(2,\mathbb R)$. 
All the other $\mathrm{sl}(2,\mathbb R)$-invariant equations from the class under consideration are similar (i.e., mapped by point transformations) 
to the well-known (``constant-coefficient'') Burgers and $u^{-4/3}$-diffusion equations.
Instead of equation~\eqref{EqA-65B1fx2hx2} it is more convenient to study the equation  
\begin{equation}\label{EqA-65B1fx2hx2Image} 
x^2v_t=vv_{xx}-\frac56(v_x)^2+x^2v_x 
\end{equation} 
for the function $v=u^{-6/5}$, i.e., $u=v^{-5/6}$. 
The maximal Lie invariance algebra of equation~\eqref{EqA-65B1fx2hx2Image} is
$\mathfrak g=\langle\p_t,\, t\p_t+x\p_x+3v\p_v,\, t^2\p_t+(2tx+x^2)\p_x+6(t+x)v\p_v\rangle$.
Extending Lie ansatzes constructed by one-dimensional subalgebras of~$\mathfrak g$, we derive the generalized ansatz 
\begin{equation}\label{EqA-65B1fx2hx2SuperAnsatz} 
v=2x^3+\varphi^4(t)x^4+\varphi^5(t)x^5+\varphi^6(t)x^6, 
\end{equation} 
which reduces equation~\eqref{EqA-65B1fx2hx2Image} to the system of ordinary differential equations
\[
\varphi^4_t = 7\varphi^5-\dfrac43(\varphi^4)^2, \quad 
\varphi^5_t = 18\varphi^6-\dfrac43\varphi^4\varphi^5, \quad 
\varphi^6_t = -\dfrac56(\varphi^5)^2+2\varphi^4\varphi^6. 
\] 
This ansatz represents the general solutions of the equation 
\[
x^3v_{xxx}-12x^2v_{xx}+60xv_x-120v+12x^3=0.
\]
In view of Theorem~\ref{TheoremOnAnsatzesAndReductionOperatorsOfEvolEqs}, 
the reduction of equation~\eqref{EqA-65B1fx2hx2Image} with the ansatz~\eqref{EqA-65B1fx2hx2SuperAnsatz} is equivalent to the fact that
\eqref{EqA-65B1fx2hx2Image} is conditionally invariant with respect to the third--order evolution vector field 
\[ 
(x^3v_{xxx}-12x^2v_{xx}+60xv_x-120v+12x^3)\p_v. 
\] 
\end{example}

\section{No-go theorem on determining equation}\label{SectionNoGoTheoremOnDetEqs}

In terms of local solutions, Corollary~\ref{CorollaryOnSolutionSetsAndReductionOperatorsOfEvolEqs1} means that 
there exists a (local) one-to-one correspondence between solutions of the determining equation~\eqref{EqForGenCondSymsOfEvolEqs} 
and $\rho$-parametric families of solutions of the initial equation~\eqref{EqGenEvolEq}. 
We show that this correspondence is realized by transformations between systems 
associated with these equations. 

\begin{theorem}\label{TheoremOnReductionOfDetEqForGenCondSymsOfEvolEqToIntialEq}
The system in the functions $\theta^a=\theta^a(t,x,u_0,\dots,u_{\rho-1})$,
which consists of the partial differential equation~\eqref{EqForGenCondSymsOfEvolEqs}, where $\check\eta$ is identified with $\theta^\rho$,
and the algebraic equations $\theta^1=u_1$, \dots, $\theta^{\rho-1}=u_{\rho-1}$, 
is reduced by the composition of the differential substitution 
\begin{equation}\label{EqDiffSubsForDetEqForGenCondSymsOfEvolEq}
\bar\theta=-\Psi^{-1}\bar{\mathcal I}_x,  
\end{equation}
where $\bar{\mathcal I}=\bar{\mathcal I}(t,x,u_0,\dots,u_{\rho-1})$ are the new unknown functions, $\Psi:=(\mathcal I^a_{u_{b-1}})$ and $\det\Psi\ne0$, 
and the hodograph transformation 
\begin{equation}\label{EqHodographTransForDetEqForGenCondSymsOfEvolEq}
\begin{split}
&\mbox{the new independent variables:}\qquad\tilde t=t, \quad \tilde x=x, \quad \varkappa_a=\mathcal I^a, 
\\ 
&\lefteqn{\mbox{the new dependent variable:}}\phantom{\mbox{the new independent variables:}\qquad }v^{b-1}=u_{b-1}. 
\end{split}
\end{equation}
to the system formed by the initial equation $\mathcal E$ in the function $\tilde u=\tilde u(\tilde t,\tilde x,\bar\varkappa)$ 
and the equations  $v^{b-1}=\p^{b-1}\tilde u/\p\tilde x^{b-1}$, $b=2,\dots,\rho$,
where $\bar\varkappa$ plays the role of a parameter tuple and  $\tilde u$ is identified with~$v^0$.
\end{theorem}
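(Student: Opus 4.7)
I would execute the substitution and hodograph in sequence, interpreting each step geometrically so that the target system emerges transparently. First, multiplying~\eqref{EqDiffSubsForDetEqForGenCondSymsOfEvolEq} on the left by $\Psi$ and inserting the source system's constraints $\theta^b = u_b$ for $b < \rho$ together with the identification $\theta^\rho = \check\eta$ rewrites the substitution as the $\rho$ identities
\[
\mathcal I^a_x + \sum_{b=1}^{\rho-1} u_b\,\Psi^{ab} + \check\eta\,\Psi^{a\rho} = 0, \qquad a = 1,\dots,\rho,
\]
i.e., $\hat D_x \mathcal I^a = 0$ on the manifold $u_\rho = \check\eta$. This says precisely that $\mathcal I^1,\dots,\mathcal I^\rho$ form a complete set of functionally independent first integrals in~$x$ of the ordinary differential equation $u_\rho = \check\eta$ (with $t$ playing the role of a parameter). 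The hodograph inversion $(u_0,\dots,u_{\rho-1})\leftrightarrow(\varkappa_1,\dots,\varkappa_\rho)$, permissible because $\det\Psi\ne0$, then yields $u_{b-1} = v^{b-1}(\tilde t, \tilde x, \bar\varkappa)$; differentiating the identity $\mathcal I^a(\tilde t, \tilde x, v^0, \dots, v^{\rho-1}) = \varkappa_a$ with respect to~$\tilde x$ at fixed $\bar\varkappa$ and comparing with~\eqref{EqDiffSubsForDetEqForGenCondSymsOfEvolEq} gives $\partial_{\tilde x} v^{b-1} = \theta^b$, which equals $v^b$ for $b < \rho$ and equals $\check\eta$ (expressed in the new variables) for $b = \rho$. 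Iterating on~$b$ delivers the target's algebraic relations $v^{b-1} = \partial^{b-1}\tilde u/\partial\tilde x^{b-1}$ for $b = 2,\dots,\rho$, together with the compatibility identity $\partial^\rho\tilde u/\partial\tilde x^\rho = \check\eta$ in the new variables.

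To convert the determining PDE~\eqref{EqForGenCondSymsOfEvolEqs} into the evolution equation~$\mathcal E$ for $\tilde u = v^0$, I would use a general chain-rule observation. For any differential function $f(t, x, u_0, \dots, u_{\rho-1})$ with image $\tilde f(\tilde t, \tilde x, \bar\varkappa) = f(\tilde t, \tilde x, v^0, \dots, v^{\rho-1})$ under the hodograph, one has $\partial_{\tilde t}\tilde f = f_t + \sum_b f_{u_{b-1}}\,\partial_{\tilde t}v^{b-1}$, while $\hat D_t f = f_t + \sum_b (\hat D_x^{b-1}\hat H)\,f_{u_{b-1}}$ by definition. Hence, treating both sides as functions of the new variables, $\partial_{\tilde t}\tilde f$ coincides with $\hat D_t f$ for every such~$f$ exactly when $\partial_{\tilde t}v^{b-1} = \hat D_x^{b-1}\hat H$ (in the new variables) for $b = 1,\dots,\rho$; the $b = 1$ case is $\tilde u_{\tilde t} = H(\tilde t, \tilde x, \tilde u_{(r,\tilde x)})$, i.e.\ $\mathcal E$, and the remaining cases are $\partial_{\tilde x}$-differential consequences of it combined with the derivative relations already established.

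The principal obstacle is closing the logical loop by showing that~\eqref{EqForGenCondSymsOfEvolEqs} itself is what forces this identification. My plan is to exploit the commutator identity $[\hat D_t, \hat D_x] = (\hat D_x^\rho \hat H - \hat D_t\check\eta)\,\partial_{u_{\rho-1}}$, the reduced-jet analogue of the commutator computation from the proof of the involutivity proposition in Section~\ref{SectionOnRelationToInvolutionOfVectorFields}: under~\eqref{EqForGenCondSymsOfEvolEqs} the operators $\hat D_t$ and $\hat D_x$ commute on the reduced jet, so $\hat D_x(\hat D_t \mathcal I^a) = \hat D_t(\hat D_x \mathcal I^a) = 0$ makes each $\hat D_t \mathcal I^a$ itself an $x$-first integral; a normalization of the $\mathcal I^a$'s on a transversal slice $\{t = t_0\}$, equivalent to fixing the parameterization~$\bar\varkappa$, then forces $\hat D_t \mathcal I^a \equiv 0$, which is exactly the required identity. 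Alternatively, this step can be short-circuited by invoking Corollary~\ref{CorollaryOnSolutionSetsAndReductionOperatorsOfEvolEqs1}, which already establishes a bijection between solutions of~\eqref{EqForGenCondSymsOfEvolEqs} and $\rho$-parameter solution families of~$\mathcal E$; the explicit transformation of the theorem must then realize that bijection up to re-parametrization.
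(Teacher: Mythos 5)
Your proposal is correct and necessarily shares the paper's skeleton (the two transformations are prescribed by the statement, and both arguments read the substitution~\eqref{EqDiffSubsForDetEqForGenCondSymsOfEvolEq} as $\hat D_x\mathcal I^a=0$ and extract the relations $v^{b-1}_{\tilde x}=v^b$, $v^{\rho-1}_{\tilde x}=\check\eta$ by differentiating $\mathcal I^a=\varkappa_a$ with respect to~$\tilde x$), but the pivotal step is handled by a genuinely different device. The paper pushes the determining equation~\eqref{EqForGenCondSymsOfEvolEqs} itself through the change of variables, obtaining $D_{\tilde x}^\rho(\tilde u_{\tilde t}-\tilde H)=\tilde\eta_{\tilde u_{b-1}}D_{\tilde x}^{b-1}(\tilde u_{\tilde t}-\tilde H)$, observes that $w=\tilde u_{\tilde t}-\tilde H$ then solves a $\rho$th-order linear ordinary differential equation in~$\tilde x$ whose solution space is spanned by the $\tilde u_{\varkappa_a}$ (since $\det(\tilde u_{\varkappa_a,b-1})\ne0$), and concludes $\tilde u_{\tilde t}-\tilde H=\zeta^a(t,\bar\varkappa)\tilde u_{\varkappa_a}$. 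You instead invoke the commutator $[\hat D_t,\hat D_x]=(\hat D_x^\rho\hat H-\hat D_t\check\eta)\p_{u_{\rho-1}}$ to make each $\hat D_t\mathcal I^a$ an $\hat D_x$-first integral, hence a function $G^a(t,\bar{\mathcal I})$. The two residues are the same object in different coordinates, since $\hat D_t\mathcal I^a=-(\tilde u_{b-1,\tilde t}-D_{\tilde x}^{b-1}\tilde H)\mathcal I^a_{u_{b-1}}$, and both proofs then kill the residue by the same re-parametrization freedom $\bar{\mathcal I}\to\bar G(t,\bar{\mathcal I})$. Your route is arguably more conceptual and ties the theorem directly to the involutivity result of Section~\ref{SectionOnRelationToInvolutionOfVectorFields}; the paper's route exhibits the transformed determining equation explicitly, which is what makes its converse direction immediate.

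Two points need tightening, neither a genuine gap. First, ``a normalization of the $\mathcal I^a$'s on a transversal slice $\{t=t_0\}$'' does not by itself force $\hat D_t\mathcal I^a\equiv0$: you must replace $\bar{\mathcal I}$ by $\bar F(t,\bar{\mathcal I})$ where $\bar F$ solves the transport system $F^a_t+G^b F^a_{\mathcal I^b}=0$ with data $F^a=\mathcal I^a$ on the slice; the slice only supplies the initial condition, and it is the solution of this $t$-flow that realizes the paper's phrase ``in view of the indeterminacy of~$\bar{\mathcal I}$''. Second, the theorem asserts that one system is carried onto the other, so the converse direction (a $\rho$-parametric solution family of~\eqref{EqGenEvolEq} pulls back to a solution of~\eqref{EqForGenCondSymsOfEvolEqs}) deserves at least a sentence: either note that every step you state is an equivalence and hence reverses, or argue as the paper does, via the $Q$-invariance of each member of the family and Corollary~\ref{CorollaryOnCondSymsAndSolutionSets}.
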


\begin{proof}
At first we construct a direct transformation. 
Extending equation~\eqref{EqForGenCondSymsOfEvolEqs}, we introduce the notation $\theta^1=u_1$, \dots, $\theta^{\rho-1}=u_{\rho-1}$, $\theta^\rho=\check\eta$. 
This notation is natural since 
in view of the definition of the operators~$\hat D_t$ and~$\hat D_x$ and equation~\eqref{EqForGenCondSymsOfEvolEqs}, 
the functions~$\theta^a$ satisfy the conditions $\hat D_t\theta^a=\hat D_x^a\hat H$. 
We consider the system consisting of the partial differential equation~\eqref{EqForGenCondSymsOfEvolEqs} 
and the algebraic equations $\theta^1=u_1$, \dots, $\theta^{\rho-1}=u_{\rho-1}$ 
and carry out the differential substitution~\eqref{EqDiffSubsForDetEqForGenCondSymsOfEvolEq}. 
In other words, $\bar{\mathcal I}$ is a tuple of solutions of the equation $\hat D_x\mathcal I=0$ with $\det(\mathcal I^a_{u_{b-1}})\ne0$. 
It is determined by $\bar\theta$ up to the transformation $\bar{\mathcal I}\to\bar G(t,\bar{\mathcal I})$, where $G^a_{\mathcal I^b}\ne0$.
Then we carry out the hodograph transformation~\eqref{EqHodographTransForDetEqForGenCondSymsOfEvolEq}. 
In what follows, for convenience we denote the function~$v^0$ by~$\tilde u$ and the derivatives $\p^k\tilde u/\p\tilde x^k$ by~$\tilde u_k$, $k=1,2,\dots$.
Differentiating the equality $\bar\varkappa=\bar{\mathcal I}$ with respect to $\tilde x$, 
we obtain $\bar{\mathcal I}_x+v^{b-1}_{\tilde x}\bar{\mathcal I}_{u_{b-1}}=0$. 
As $\hat D_x\bar{\mathcal I}=0$ and $\det\Psi\ne0$, this means that 
$v^{b-1}_{\tilde x}=v^b$, $b<\rho$, $\smash{v^{\rho-1}_{\tilde x}=\check\eta(\tilde t,\tilde x,v^0,\dots,v^{\rho-1})}$,
and therefore $v^{b-1}=\tilde u_{b-1}$ and $\tilde u_\rho=\tilde\eta=\check\eta(\tilde t,\tilde x,\tilde u_{(\rho-1,\tilde x)})$, 
i.e., $\theta^a=\tilde u_a$. 
In the new variables we also have that $\smash{\hat D_x=\p_{\tilde x}+(\hat D_x\mathcal I^a)\p_{\varkappa_a}=\p_{\tilde x}}$. 
This operator acts on the functions of $\tilde t$, $\tilde x$ and derivatives of $\tilde u$ as 
the operator $D_{\tilde x}$ of total derivation with respect to the variable~$\tilde x$. 
Hence $\hat D_x^k\check\eta=\tilde u_{\rho+k}$ and $\hat H=\tilde H:=H(\tilde t,\tilde x,\tilde u_{(r,\tilde x)})$. 
Analogously 
\[
\hat D_t=\p_{\tilde t}+(\hat D_t\mathcal I^a)\p_{\varkappa_a}
=\p_{\tilde t}-(\tilde u_{b-1,\tilde t}-D_{\tilde x}^{b-1}\tilde H)\mathcal I^a_{u_{b-1}}\p_{\varkappa_a} 
\]
since $\mathcal I^a_t+\tilde u_{b-1,\tilde t}\mathcal I^a_{u_{b-1}}=0$.
Moreover, as $\tilde u_\rho=\tilde\eta$, we also have 
$\tilde u_{\rho\varkappa_a}=D_{\varkappa_a}\tilde\eta=\tilde\eta_{\tilde u_{b-1}}\tilde u_{b-1,\varkappa_a}$, 
and the matrix $(\tilde u_{b-1,\varkappa_a})$ is the inverse of the matrix $(\mathcal I^a_{u_{b-1}})$. 
This is why in the new variables the equation~\eqref{EqForGenCondSymsOfEvolEqs} takes the form 
\[
D_{\tilde x}^\rho(\tilde u_{\tilde t}-\tilde H)=\tilde\eta_{\tilde u_{b-1}}D_{\tilde x}^{b-1}(\tilde u_{\tilde t}-\tilde H).
\]
For a fixed function $\tilde u$, the equation $w_\rho=\tilde\eta_{\tilde u_{b-1}}w_{b-1}$ 
with respect to the function $w=w(\tilde t,\tilde x,\bar\varkappa)$ is a $\rho$th-order ordinary differential equation, 
with $\tilde x$ as the independent variable and $\tilde t$ and $\bar\varkappa$ playing the role of parameters. 
The functions $\tilde u_{\varkappa_a}$ are linearly independent solutions of this equation since $\det(\tilde u_{\varkappa_a,b-1})\ne0$.
Therefore, there exist functions $\zeta^a=\zeta^a(t,\bar\varkappa)$ such that 
$\tilde u_{\tilde t}-\tilde H=\zeta^a\tilde u_{\varkappa_a}$. 
In view of the indeterminacy of~$\bar{\mathcal I}$, we can make the transformation $\bar\varkappa\to\bar G(t,\bar\varkappa)$ 
to transform the last equation to the equation of the same form with $\zeta^a=0$.

Conversely, let $\tilde u=\tilde u(\tilde t,\tilde x,\bar\varkappa)$ be a $\rho$-parametric solution of equation~\eqref{EqGenEvolEq}. 
(We use the notation with tildes to be consistent with the first part of the proof.) 
Assuming $v^{b-1}=\tilde u_{b-1}$ as the unknown functions, 
we obtain the system $v^0_{\tilde t}=H(\tilde t,\tilde x,v^0_{(r,\tilde x)})$ and $v^{b-1}_{\tilde x}=v^b$, $b<\rho$.  
We successively carry out the inverse of the hodograph transformation~\eqref{EqHodographTransForDetEqForGenCondSymsOfEvolEq} 
and the inverse of the differential substitution~\eqref{EqDiffSubsForDetEqForGenCondSymsOfEvolEq} 
and denote the function $\theta^\rho=\theta^\rho(t,x,u_0,\dots,u_{\rho-1})$ by $\check\eta$.
By construction we have that $\theta^1=u_1$, \dots, $\theta^{\rho-1}=u_{\rho-1}$ 
and for each $\bar\varkappa$ the solution $\tilde u=\tilde u(\tilde t,\tilde x,\bar\varkappa)$ of~\eqref{EqGenEvolEq} 
is invariant with respect to the operator $Q=(u_\rho-\check\eta)\p_u$. 
This means that $Q$ is an operator of generalized conditional symmetry of~\eqref{EqGenEvolEq} and, therefore, 
the function~$\check\eta$ satisfies equation~\eqref{EqForGenCondSymsOfEvolEqs}. 
\end{proof}

We call Theorem~\ref{TheoremOnReductionOfDetEqForGenCondSymsOfEvolEqToIntialEq} ``a no-go theorem''
since it basically states that solving the determining equation for generalized conditional symmetry operators 
is as difficult as solving the original equation. 
It generalizes the analogous no-go theorem 
on the determining equations for usual conditional symmetry operators of evolution equations, 
whose coefficient of~$\p_t$ is equal to zero 
\cite{Fushchych&Shtelen&Serov&Popovych1992,Kunzinger&Popovych2008a,Popovych1995,Popovych1998,Vasilenko&Popovych1999,Zhdanov&Lahno1998}. 
The main problem in generalizing that result was that 
the corresponding hodograph transformation should involve $\rho$ independent variables. 
At the same time, both the initial and determining equations involve only a single dependent variable.  

Note that the attribute ``no-go'' should be treated as impossibility of the exhaustive solution of the problem.
At the same time, imposing additional constraints on the differential function~$\check\eta=\check\eta(t,x,u_{(\rho-1,x)})$ or 
choosing a specific form for this function,
one can construct a number of particular examples of generalized conditional symmetries 
and then apply them to finding exact solutions of the original equation~$\mathcal E$.
Since the determining equation~\eqref{EqForGenCondSymsOfEvolEqs} has more independent variables and, therefore, more degrees of freedom,
often it is more convenient to guess a simple solution or a simple ansatz
for the determining equation, which may then provide a parametric set of more complicated solutions of the original equation~$\mathcal E$.

This situation is similar to that of Lie symmetries of first-order ordinary differential equations. 
Indeed, the solution of the determining equation for Lie symmetries of a first-order ordinary differential equation~$\mathcal L$ 
is a much more complicated problem than the solution of the original equation~$\mathcal L$. 
Even if a Lie symmetry generator of~$\mathcal L$ is known, 
it may be just as difficult to find an invariant of the associated one-parameter group 
(which is a necessary step of solving by the Lie method)
as it was to integrate the original differential equation~$\mathcal L$~\cite[pp. 131--133]{Olver1993}. 
At the same time, certain first-order ordinary differential equations (e.g., homogeneous ones) 
possess simple Lie symmetries which can easily be found by an educated guess and then effectively used for the integration of these equations.

The above approach to the construction of exact solutions using generalized conditional symmetries of special kinds was applied in the literature to a number 
of different classes of evolution equations, in particular to quasilinear second-order evolution equations. 
We recall only some of these results.

Generalized conditional symmetries of many particular cases of equations of the general form 
\[
u_t=g^3(t,x,u)u_{xx}+g^2(t,x,u)u_x^2+g^1(t,x,u)u_x+g^0(t,x,u)
\]
were looked for by a number of authors in a form similar to the right hand sides of the corresponding equations, 
\[
\eta=u_{xx}+\tilde g^2(t,x,u)u_x^2+\tilde g^1(t,x,u)u_x+\tilde g^0(t,x,u),
\]
or in the equivalent form $\eta=u_t+\hat g^2(t,x,u)u_x^2+\hat g^1(t,x,u)u_x+\hat g^0(t,x,u)$,
see, e.g., \cite{Galaktionov&Posashkov1996,Ji&Qu2007,Ji2010,Qu1996,Qu1999,Zhdanov1995} and references therein. 
Another intensively investigated class of generalized conditional symmetries and generalized ansatzes 
is related to differential constraints which are equivalent to linear differential constraints with respect to point transformations, 
see, e.g., \cite{Cherniha1997,Galaktionov1990,Galaktionov&Svirshchevskii2007,Zhdanov1995} and references therein 
and cf.\ also Example~\ref{ExampleOnSL2InvVCDCEq}.

As shown in the next section, a generalized first-order conditional symmetry in canonical form $(u_x-\eta(t,x,u))\p_u$ 
of an evolution equation is, up to sign, the evolution form of the singular nonclassical symmetry operator $\p_x+\eta(t,x,u)\p_u$ of the same equation. 
In~\cite{Pucci&Saccomandi2000} such symmetries of different classes of quasilinear second-order evolution equations were studied 
under the assumption of separation of variables in the coefficient~$\eta$, $\eta=\zeta^0(t)\zeta^1(x)\zeta^2(u)$. 
Earlier the partial case $\eta=\zeta^1(x)\zeta^2(u)$ was investigated in~\cite{Galaktionov&Posashkov1998} for 
equations of the form $u_t=u_x^\sigma u_{xx}+\mu u_x^{\sigma+1}+f(u)$. 
The important special subcases $\zeta^1(x)=x$ and $\zeta^1(x)=x^{-1}$ were separated therein. 
The latter subcase, which generalizes scale-invariant solutions, was considered within a more general framework in~\cite{Galaktionov2001}.
An extension of results obtained in~\cite{Galaktionov&Posashkov1998} was presented in~\cite{Qu&Estevez2003}. 
The ansatz $\check\eta=\eta^1(t,x)u^{\alpha+1}+\eta^0(t,x)u^\alpha$ was used in~\cite{Gandarias2001} 
for the fast diffusion equations of the form $u_t=(u^{-\alpha}u_x)_x$.

\section{Usual and generalized reduction operators}\label{SectionOnUsualAndGenRedOps}

It seems natural that usual conditional symmetry is a particular case of generalized conditional symmetry. 
On the other hand, the criterion of usual conditional symmetry restricted to the case of evolution equations 
is essentially different from~\eqref{EqCriterionOfCondInvWrtHigheeOrderOpInReducedEvolForm}.
This is why we formulate the precise relation as a proposition. 

\begin{proposition}\label{PropositionOnUsualAndGenRedOps}
The vector field $Q=\tau\p_t+\xi\p_x+\eta\p_u$, where the coefficients $\tau$, $\xi$ and $\eta$ are functions of $t$, $x$ and~$u$, 
is a usual conditional symmetry operator of an equation~$\mathcal E$ of the form~\eqref{EqGenEvolEq} if and only if 
the operator $\hat Q=\hat\eta\p_u$, where $\hat\eta=\eta-\tau H-\xi u_x$, 
is a generalized conditional symmetry operator of the same equation. 
\end{proposition}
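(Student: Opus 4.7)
The essential observation is that on the manifold $\mathcal E_r$ (i.e., after substituting $u_t=H$) the characteristic $Q[u]=\eta-\tau u_t-\xi u_x$ of the vector field~$Q$ collapses to $\hat\eta=\eta-\tau H-\xi u_x$, so that
\[
Q[u]-\hat\eta=-\tau(u_t-H)=-\tau E,
\]
i.e., $Q[u]$ and $\hat\eta$ are equivalent characteristics in the sense of the equivalence relation on generalized vector fields considered in Section~\ref{SectionOnDifferentFormsOfCriterionOfCondInv}. This identity will be the bridge between the usual and the generalized invariance criteria.

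First, I would rewrite the usual conditional invariance criterion in evolutionary form. Using the standard prolongation identity
\[
Q_{(r)}F=(Q[u]\p_u)_{(r)}F+\tau D_tF+\xi D_xF
\]
valid for any differential function~$F$, and applying it to $F=E=u_t-H$, the last two terms vanish on~$\mathcal E_r$ since $D_tE$ and $D_xE$ are themselves differential consequences of~$\mathcal E$. Consequently, the condition $Q_{(r)}E|_{\mathcal E_r\cap\mathcal Q_{(r,x)}}=0$ is equivalent to $(Q[u]\p_u)_{(r)}E|_{\mathcal E_r\cap\mathcal Q_{(r,x)}}=0$. This already reduces the usual criterion to the shape of a generalized one, with characteristic $Q[u]$.

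Second, I would replace $Q[u]$ by $\hat\eta$ in both the prolongation and the side condition. From $Q[u]=\hat\eta-\tau E$ the Hadamard-lemma argument already used in Section~\ref{SectionOnDifferentFormsOfCriterionOfCondInv} to show that generalized conditional symmetries equivalent on solutions of~$\mathcal E$ are indistinguishable gives $(Q[u]\p_u)_{(r)}E=(\hat\eta\p_u)_{(r)}E$ on~$\mathcal E_r$ (together with sufficiently many $x$- and $t$-prolongations). Similarly, for every $k$, $D_x^kQ[u]=D_x^k\hat\eta-D_x^k(\tau E)$, and each term $D_x^k(\tau E)$ vanishes identically on~$\mathcal E_{r+k}$. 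Thus, after intersecting with enough prolongations of~$\mathcal E$, the constraint manifold $\mathcal Q_{(r,x)}$ coincides with the manifold $\{D_x^k\hat\eta=0,\ k=0,\dots,r-1\}$ appearing in Definition~\ref{DefinitionOfGenCondSym0} for $\hat Q=\hat\eta\p_u$. Combining these two observations, the usual criterion for~$Q$ and the generalized criterion for~$\hat Q$ both become the condition $D_t\hat\eta|_{\mathcal M}=0$ on the same manifold~$\mathcal M$, yielding the claimed equivalence.

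The one technical point that has to be handled carefully is the bookkeeping of jet orders (or, more cleanly, weights in the sense of Section~\ref{SectionOnFormalCompAndCondSym}): the characteristic $Q[u]$ has order~$1$, whereas $\hat\eta$ has weight~$r$ as soon as $\tau\ne0$, so the two criteria are naturally formulated in different jet spaces. The main thing to check is therefore that enlarging the jet space from the one used for the usual criterion to the one needed for~$\hat Q$ does not introduce genuinely new constraints; this is precisely what the $E$-absorption identities above guarantee, and it is the only nontrivial step. Once this matching of jet levels is in place, the equivalence of the two criteria is immediate in both directions.
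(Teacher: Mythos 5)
Your plan is essentially the paper's own direct argument (the ``second way'' of its proof), reorganized: you split the prolonged field into its evolutionary part plus $\tau D_t+\xi D_x$ and use $Q[u]=\hat\eta-\tau E$ to trade $Q[u]$ for $\hat\eta$ in both the prolongation and the side conditions, whereas the paper computes $(\hat Q_{(r)}-Q_{(r)})E$ directly; in both cases the upshot is that all discrepancies are combinations of prolongations of $E$. (The paper also offers a second, shorter route you do not take, via Corollary~\ref{CorollaryOnCondSymsAndSolutionSets} and the known parametrization of families of invariant solutions of usual conditional symmetries.)

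The one place where your justification does not yet carry the weight is precisely the step you flag as ``the only nontrivial step'', and the reason you give for it is not the right one. The $E$-absorption identities show that $Q_{(r)}E-\hat Q_{(r)}E$ and $D_x^kQ[u]-D_x^k\hat\eta$ vanish modulo the equations $D_t^{\alpha_0}D_x^{\alpha_1}E=0$; they do \emph{not} by themselves show that adjoining those equations to the side conditions is harmless. Adjoining constraints shrinks the manifold on which $Q_{(r)}E$ must vanish, so a priori it \emph{weakens} the criterion (note in particular that $D_tE$ and $D_xE$ do not even live in $J^r$, so they do not ``vanish on $\mathcal E_r$'' as written). The missing argument is the variable count the paper carries out: $Q_{(r)}E$ involves only the derivatives $u_k$, $k\leqslant r$, and $u_{tj}$, $j\leqslant \rho-1$, while $\hat Q_{(r)}E$ involves only $u_m$, $m\leqslant r+\rho$, and $u_{t\hat k}$, $\hat k\leqslant\rho$; each newly adjoined equation $D_x^kE=0$ can be solved, thanks to $H_{u_r}\ne0$, for a derivative ($u_{r+k}$ or $u_{t\hat k}$) that is either absent from the expression being restricted or already determined by the remaining constraints, so the projection of the enlarged constraint manifold onto the relevant jet variables is unchanged. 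Once you insert this bookkeeping, your chain of equivalences closes and the proof is complete.
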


\begin{proof}
The first way of proving this is simpler but essentially involves statements on properties of 
the corresponding families of invariant solutions. 
A solution of~$\mathcal E$ is $Q$-invariant if and only if it is $\hat Q$-invariant. 
Moreover, $\ord\hat\eta=\rho$, where $\rho=r$ if $\tau\ne0$ and $\rho=1$ if $\tau=0$.
Suppose that $Q$ is a usual reduction operator of~$\mathcal E$. 
Propositions~2 and~5 from~\cite{Kunzinger&Popovych2008a} imply that 
the equation~$\mathcal E$ possesses an $r$-parametric (resp.\ one-parametric) family 
of $Q$-invariant solutions if $\tau\ne0$ (resp.\ $\tau=0$). 
Then Corollary~\ref{CorollaryOnCondSymsAndSolutionSets} implies that 
$\hat Q$ is a generalized conditional symmetry operator of~$\mathcal E$. 
The proof of the converse is similar. 

The second way is more direct and technical. 
We have to show that the corresponding invariance criteria are equivalent. 
In what follows $E=u_t-H$, $\tilde\eta=Q[u]=\eta-\tau u_t-\xi u_x$, 
$k=0,\dots,r$, $\hat k=0,\dots,\rho$, 
$j=0,\dots,r-1$ and $\hat j=0,\dots,\rho-1$.
We have 
$
(\hat Q_{(r)}-Q_{(r)})E=ED_t\tau-\xi D_xE-H_{u_k}D_x^kE.
$
The expression $Q_{(r)}E$ involves at most the derivatives $u_k$ and $u_{t,\hat j}$. 
Hence the differential consequences which should be taken into account in the usual conditional invariance criterion 
are exhausted by $\mathcal E$ itself and $D_x^j\tilde\eta=0$. 
Analogously, 
the expression $\hat Q_{(r)}(u_t-H)$ involves at most the derivatives $u_m$, $m=0,\dots,r+\rho$, and $u_{t,\hat k}$. 
Therefore, the differential consequences which should be taken into account in the usual conditional invariance criterion 
are exhausted by $D_x^{\hat k}E=0$ and $D_x^k\hat\eta=0$. 
Finally, we have the chain of equivalences 
\begin{gather*}
\smash{Q_{(r)}E\big|_{\mathcal E_\cap\mathcal Q_r}=0}\quad\Longleftrightarrow\quad
\mbox{$Q_{(r)}E=0$ \ when \ $E=0$ \ and \ $D_x^j\tilde\eta=0$}\quad\Longleftrightarrow\quad 
\\
\mbox{$Q_{(r)}E=0$ \ when \ $D_x^{\hat k}E=0$ \ and \ $D_x^j\tilde\eta=0$}\quad\Longleftrightarrow\quad 
\\
\mbox{$\hat Q_{(r)}E=0$ \ when \ $D_x^{\hat k}E=0$ \ and \ $D_x^k\hat\eta=0$}\quad\Longleftrightarrow\quad 
\hat Q_{(r)}E\big|_{\mathcal E_{r+\rho}\cap\hat{\mathcal Q}_{(r+\rho,x)}}=0,
\end{gather*}
and the result follows.
\end{proof}

Despite the fact that the sets of $Q$- and $\hat Q$-invariant solutions of~$\mathcal E$ coincide, 
in the case $\tau\ne0$ the procedures of the reduction of~$\mathcal E$ 
with respect to the operators~$Q$ and~$\hat Q$ are quite different. 
Thus, the operator~$Q$ reduces $\mathcal E$ to a single $r$th-order ODE in a single unknown function, 
where the invariant independent variable necessarily depends on $x$ or $u$.  
The operator~$\hat Q$ reduces $\mathcal E$ to a system of $r$ first-order ODE in $r$ unknown functions, 
where $t$ can be taken as the invariant independent variable.  
We illustrate this situation by the following example. 

\begin{example}
Usual reduction operators of variable coefficient semilinear diffusion equations with power sources 
were investigated in~\cite{Vaneeva&Popovych&Sophocleous2009,Vaneeva&Popovych&Sophocleous2010}. 
Namely, the equations studied have the general form 
\begin{equation}\label{EqRDfghPower}
f(x)u_t=(g(x)u_x)_x+h(x)u^m,
\end{equation}
where $f$, $g$ and $h$  are arbitrary smooth functions of the variable~$x$,
$f(x)g(x)h(x)\neq0$, and $m$ is an arbitrary constant, $m\neq0,1$.
The most convenient approach to this problem, as it turns out, is to map the class~\eqref{EqRDfghPower} to 
the class 
\begin{equation}\label{Eq_vFH}
v_t=v_{xx}+H(x)v^m+F(x)v
\end{equation}
by a family of point transformations parameterized by arbitrary elements $f$, $g$ and~$h$ 
and then to investigate usual reduction operators of equations from the latter class. 
Under both the group classification and the classification of reduction operators 
the following equation is singled out from the class~\eqref{Eq_vFH}:
\begin{equation}\label{Eq_vFH_example5}
v_t=v_{xx}-\frac{v^3}{x^3}+\frac94\frac v{x^2}.
\end{equation}
Note that by the point transformation $\tilde t=t$, $\tilde x=\ln|x|$, $\tilde v =v/\sqrt{2|x|}$ 
equation~\eqref{Eq_vFH_example5} is reduced to the equation $e^{2\tilde x}\tilde v_{\tilde t}=\tilde v_{\tilde x\tilde x}-2\tilde v^3+2\tilde v$.

The maximal Lie invariance algebra of~\eqref{Eq_vFH_example5} is generated by the operators $D=4t\p_t+2x\p_x+v\p_v$ and $\p_t$.
Inequivalent non-Lie usual reduction operators of~\eqref{Eq_vFH_example5}, having nonzero coefficients of~$\p_t$ 
are exhausted, up to the discrete symmetry transformation of alternating the sign of~$v$, by the operator
\begin{gather*}
Q=\p_t+\biggl(\frac{3\sqrt 2}2\frac v{x^{3/2}}-\frac3x\biggr)\p_x
-\frac32\biggl(\frac{v^3}{x^3}-\frac{3\sqrt 2}2\frac{v^2}{x^{5/2}}-\frac v{x^2}+\frac{2\sqrt 2}{x^{3/2}}\biggr)\p_v.
\end{gather*}
For all expressions to be well-defined, we have to restrict ourselves to values $x>0$. 
(Another way is to replace $x$ by $|x|$.)

We discuss two ways of using the operator~$Q$ for finding exact solutions of equation~\eqref{Eq_vFH_example5}. 

{\it First way.}
To construct an ansatz with the operator~$Q$, we have to solve the quasilinear first-order partial differential equation $Q[v]=0$.
The corresponding invariant independent variable necessarily involves the dependent variable~$v$. 
For simplifying calculations, we suppose at first that $v_t\ne0$ and carry out the hodograph transformation
$\tilde t=v$, $\tilde x=x$, $\tilde v=t$
which maps equation~\eqref{Eq_vFH_example5} and the reduction operator~$Q$ to the equation
\begin{equation}\label{Eq_vFH_example5_hod}
{\tilde v}_{\tilde t}{}^2\,{\tilde v}_{\tilde x\tilde x}+{\tilde v}_{\tilde x}{}^2\,{\tilde v}_{\tilde t\tilde t}-
2\,{\tilde v}_{\tilde t}\,{\tilde v}_{\tilde x}\,{\tilde v}_{\tilde t\tilde x}+{\tilde v}_{\tilde t}{}^2
+\frac{{\tilde t}^{\,3}}{{\tilde x}^3}\,{\tilde v}_{\tilde t}{}^3-
\frac94\frac{\tilde t}{{\tilde x}^2}\,{\tilde v}_{\tilde t}{}^3=0
\end{equation}
and its reduction operator
\[
\tilde Q=
-\frac32\biggl(\frac{\tilde t^3}{\tilde x^3}-\frac{3\sqrt 2}2\frac{\tilde t^2}{\tilde x^{5/2}}
-\frac {\tilde t}{\tilde x^2}+\frac{2\sqrt 2}{\tilde x^{3/2}}\biggr)\p_{\tilde t}
+\biggl(\frac{3\sqrt 2}2\frac {\tilde t}{{\tilde x}^{3/2}}-\frac3{\tilde x}\biggr)\p_{\tilde x}+\p_{\tilde v},
\]
respectively.
An ansatz constructed with the operator~$\tilde Q$ has the form
\[
\tilde v=z(\omega)+\frac1{24}{\tilde x}^2\frac{\tilde t+\sqrt{2\tilde x}}{\tilde t-\sqrt{2\tilde x}}-\frac1{12}{\tilde x}^2,
\quad\mbox{where}\quad\omega={\tilde x}^2\frac{\tilde t-\sqrt{2\tilde x}}{\tilde t+\sqrt{2\tilde x}},
\]
and reduces~\eqref{Eq_vFH_example5_hod} to the single second-order linear ordinary differential equation $\omega z_{\omega\omega}+2z_{\omega}=0$ 
in the function $z=z(\omega)$. 
After substituting to the ansatz, the general solution $z=\tilde c_1+\tilde c_2\omega^{-1}$ of the reduced equation
gives the exact solution
\[
\tilde v=\frac{{\tilde x}^4+24\tilde c_2}{24{\tilde x}^2}\frac{\tilde t+\sqrt{2\tilde x}}{\tilde t-\sqrt{2\tilde x}}-\frac1{12}{\tilde x}^2+\tilde c_1
\]
of~\eqref{Eq_vFH_example5_hod}.  
Applying  the inverse hodograph transformation,
we construct the non-Lie solution
\begin{equation}\label{Eq_vFH_example5_SolutionPart1}
v=\sqrt{2x}\,\frac{3x^4+(24t+c_1)x^2-c_2}{x^4+(24t+c_1)x^2+c_2}
\end{equation}
of equation~\eqref{Eq_vFH_example5}, where $c_1=-24\tilde c_1$ and $c_2=-24\tilde c_2$. 
The constant~$\tilde c_1$ can be canceled by a translation with respect to~$t$. 
If $c_2\ne0$, this constant can be set to~1 by a scale transformation generated by~$D$. 
(Recall that the above transformations are Lie symmetries of equation~\eqref{Eq_vFH_example5}.)
The solution \eqref{Eq_vFH_example5_SolutionPart1} with $c_2=0$ is a Lie solution invariant with respect to
the scale symmetry operator $D$.
However, it is much harder to find this solution by the reduction with respect to the operator $D$.
The corresponding ansatz $v=\sqrt{2x}\,z(\omega)$, where $\omega=x/\sqrt{|t|}$, has a simple form
but the reduced ODE $\omega^2z_{\omega\omega}+\omega(1-\omega)z_{\omega}+2z-2z^3=0$
is nonlinear and complicated.

Under the condition $v_t=0$ the equation $Q[u]=0$ implies equation~\eqref{Eq_vFH_example5} 
and is reduced by the transformation $v=\sqrt{2x}\,z(x)$ to the equation $(z-1)(xz_x+z^2-1)=0$
which is equivalent to the Riccati equation $xz_x=1-z^2$. 
Therefore, the corresponding invariant solutions of~\eqref{Eq_vFH_example5} have the form
\begin{equation}\label{Eq_vFH_example5_SolutionPart2}
v=\sqrt{2x}\,\frac{c_1x^2-c_2}{c_1x^2+c_2},
\end{equation}
where only the ratio of the constants~$c_1$ and~$c_2$ is essential. 
Note that a function~$v$ is a stationary solution of~\eqref{Eq_vFH_example5} if and only if 
$v=\sqrt{2x}\,z(x)$, where $z=z(x)$ is a solution of the equation $z_{xx}=2(z^3-z)$ 
which is integrable in terms of elliptic functions. 

{\it Second way.} Another way to use the operator~$Q$ for finding exact solutions of equation~\eqref{Eq_vFH_example5} is 
to consider the second-order reduction operator $\eta(t,x,v,v_x,v_{xx})\p_v$, 
where the differential function coincides with the characteristic $Q[v]$ 
on the manifold determined by equation~\eqref{Eq_vFH_example5} in the corresponding second-order jet space. 
Here 
\[
\eta=-v_{xx}
-\frac{3\sqrt 2}2\frac{vv_x}{x^{3/2}}+\frac3xv_x
-\frac12\frac{v^3}{x^3}+\frac{9\sqrt 2}4\frac{v^2}{x^{5/2}}-\frac34\frac v{x^2}-\frac{3\sqrt 2}{x^{3/2}}.
\]
The associated invariant surface condition $\eta=0$ is a second-order ordinary differential equation, 
where $x$ and $v$ are independent and dependent variables, respectively, and $t$ plays the role of a parameter. 
It is reduced by the differential substitution 
\[
v=\sqrt{2x^3}\,\frac{w_x}w,
\]
to the linear equation 
\begin{equation}\label{EqModTransformedFor_vFH_example5}
x^3w_{xxx}-3xw_x+3w=0
\end{equation}
in the new unknown function $w=w(t,x)$,
whose general solution is $w=\psi^0(t)x^3+\psi^1(t)x+\psi^2(t)x^{-1}$. 
Therefore, we have the following ansatz for the function~$v$:
\begin{equation}\label{EqAnsatz2For_vFH_example5}
v=\sqrt{2x}\,\frac{3\psi^0(t)x^4+\psi^1(t)x^2-\psi^2(t)}{\psi^0(t)x^4+\psi^1(t)x^2+\psi^2(t)},
\end{equation}
where only two ratios of the functions $\psi^\mu$, $\mu=1,2,3$, are essential. 

To make a conventional reduction of equation~\eqref{Eq_vFH_example5} with ansatz~\eqref{EqAnsatz2For_vFH_example5}, 
we would suppose that one of the functions~$\psi^\mu$, e.g. $\psi^0$, is nonvanishing. 
After substituting ansatz~\eqref{EqAnsatz2For_vFH_example5} into~\eqref{Eq_vFH_example5}, we would obtain 
the reduced system of two first-order ODEs in the functions $\varphi^i=\psi^i/\psi^0$, $i=1,2$.
Then it would be necessary to consider the case $\psi^0=0$ and $\psi^1\ne0$ and 
to derive the reduced first-order ODE in $\varphi=\psi^2/\psi^1$. 
The condition $(\psi^0,\psi^1)=(0,0)$ leads to the single solution $v=-\sqrt{2x}$.
This partition into different cases corresponds to the partition made in the first way. 

We use a more advanced technique allowing us to avoid the consideration of different cases. 
The entire systems of the equation $\eta=0$ and equation~\eqref{Eq_vFH_example5} is equivalent to 
the system of the equations~\eqref{EqModTransformedFor_vFH_example5} and 
\begin{equation}\label{EqInWFor_vFH_example5}
w_t=3w_{xx}+\frac3xw_x-\frac3{x^2}w.
\end{equation}
Moreover, $(x^3w_{xxx}-3xw_x+3w)\p_w$ is an operator of generalized conditional symmetry of~\eqref{EqInWFor_vFH_example5}. 
Therefore, the associated ansatz $w=\psi^0(t)x^3+\psi^1(t)x+\psi^2(t)x^{-1}$ reduces equation~\eqref{EqInWFor_vFH_example5}, 
and the corresponding reduced system is $\psi^0_t=0$, $\psi^1_t=24\psi^0$, $\psi^2_t=0$ with the general solution 
$\psi^0=c_0$, $\psi^1=24c_0t+c_1$, $\psi^2=c_2$. 
As a result, we have the solution 
\[
v=\sqrt{2x}\,\frac{3c_0x^4+(24c_0t+c_1)x^2-c_2}{c_0x^4+(24c_0t+c_1)x^2+c_2}
\]
of equation~\eqref{EqInWFor_vFH_example5}. 
The conditions $c_0\ne0$ and $c_0=0$ correspond to 
the solutions~\eqref{Eq_vFH_example5_SolutionPart1} and~\eqref{Eq_vFH_example5_SolutionPart2}, 
respectively. 
\end{example}

In a similar way, the conversion of usual nonclassical symmetries into generalized ones was implicitly used, e.g., 
in \cite{ArrigoHillBroadbridge1993,Clarkson&Mansfield1993,Nucci&Clarkson1992} 
in the reduction of the nonlinear (constant coefficient) heat equations with source terms in the form of cubic polynomials, 
including the Fitzhugh--Nagumo equation.

\section{Conclusion}

In the study of generalized conditional symmetry of evolution equations we adhere to the following principles:
\begin{list}{\arabic{enumi}.}{\usecounter{enumi}
\labelwidth=3ex\labelsep=1ex\leftmargin=3ex
\topsep1.ex\parsep0.5ex\itemsep0.5ex\partopsep0mm}

\item
The property of an operator~$Q$ to be a conditional symmetry of a differential equation~$\mathcal L$ 
is equivalent to the fact that the corresponding invariant surface equation $Q[u]=0$ is formally compatible (in a certain sense) with~$\mathcal L$, 
i.e., the joint system of the above two equations has no nontrivial differential consequences. 
This determines what differential consequences of these equations should be involved in the criterion of 
the conditional invariance of the equation~$\mathcal L$ with respect to the operator~$Q$. 
In the property of formal compatibility conditional symmetries differ from purely weak symmetries~\cite{Olver&Rosenau1987,Saccomandi2004} 
for which the calculation of integrability conditions (resp. ``the reduction to passive form'') of the corresponding joint systems 
has to be carried out in each case. 
In fact, weak symmetries of~$\mathcal L$ are associated with differential constraints 
whose solution sets at least intersect the solution set of~$\mathcal L$.

\item
Therefore, the criterion of conditional invariance in fact is nothing but the criterion of formal compatibility 
for a system associated with the pair of equations~$\mathcal L$ and $Q[u]=0$. 
This has two consequences: 
There does not exist a universal explicit criterion of conditional invariance similar to the criterion of Lie invariance,  
which would contain a priori the complete information which differential consequences to take into account and 
would be appropriate for any system of differential equations and any set of generalized vector fields.
At the same time, for any fixed pair of a system of differential equations and a set of generalized vector fields 
the criterion can be formulated in different forms. 

\item
Single generalized conditional symmetries are assumed equivalent if they differ by multipliers being nonvanishing differential functions. 
Therefore it suffices to consider only symmetries with characteristic containing some isolated (e.g., highest-order) derivative 
of the unknown function. 

\item
In order to be usable, a conditional symmetry should correspond to an integrable differential constraint which admits 
a simple representation of its general solution. 
Such a representation is considered as an ansatz for the solution of the initial equation~$\mathcal L$. 
The formal compatibility of the differential constraint with~$\mathcal L$ should imply a (strong) reduction of~$\mathcal L$ by the ansatz. 
In other words, after the substitution of the ansatz into~$\mathcal L$ we should obtain a system of differential equations 
of a simpler structure, e.g., with a smaller number of independent variables. 
Symmetries equivalent as vector fields induce the same set of ansatzes and equivalent reductions.
In fact there does not exist a universal precise definition of reduction which does not involve splitting with respect to parametric variables 
and covers all possible representations of solutions. 
In view of the above problems of integrability and defining reduction, 
it is still unclear in the general case what differential constraints formally compatible with the initial equation 
should be considered associated with reduction operators. 
This question becomes trivial and has a positive answer in the situation considered in the paper 
(single evolution equations and differential constraints depending only on derivatives with respect to~$x$).
Probably, in the general case it would be more natural to assume that 
the notion of reduction operator is narrower than the notion of conditional symmetry, cf.~\cite{Olver1994}. 

\item
If the characteristics of operators coincide on the manifold determined by the initial equation~$\mathcal L$ 
and one of the operators corresponds to a differential constraint formally compatible with~$\mathcal L$ 
then the other operators have the same property. 
Such conditional symmetry operators can be considered equivalent in a weak sense 
since they are associated with the same set of invariant solutions of~$\mathcal L$.  
At the same time, they are inequivalent, in general, from the point of view of their usefulness for finding solutions. 
In particular, they may give inequivalent ansatzes and reduced systems.  
\end{list}

\noprint{
In view of the above principles, the natural ranking of notions is 
\[
\mbox{weak symmetry}\quad>\quad
\mbox{conditional symmetry}\quad>\quad
\mbox{reduction operators}.
\]
}

These general principles can be applied in other situations as well. 
We plan to complete soon our study on basic properties of usual (i.e., first-order quasilinear) conditional symmetries 
of systems of differential equations.   

In spite of the no-go results presented in the paper, 
generalized conditional symmetries can be effectively applied to the construction of exact solutions of evolution equations. 
As it is impossible to exhaustively describe generalized conditional symmetries of a fixed evolution equation, 
they should be looked for under additional constraints or in special classes of differential functions, e.g., with separated variables. 
In this way, usual and generalized conditional symmetries were studied for a number of particular subclasses of evolution equations, 
cf.\ the discussion in the end of Section~\ref{SectionNoGoTheoremOnDetEqs}. 
Note that only in~\cite{Olver1994} generalized conditional symmetries which are not in reduced form were considered. 
Generalized conditional symmetries were also used for the exact solution of initial-value problems for evolution equations 
\cite{Basarab-Horwath&Zhdanov2001,Zhdanov2000}. 
Another relevant direction of research is the related inverse problem, namely, 
the description of evolution equations possessing certain generalized conditional symmetries, 
see \cite{Kamran&Milson&Olver2000,Sergyeyev2002,Sergyeyev2004,Svirshchevskiy1995} and references therein. 

A systematic investigation of generalized conditional symmetries of non-evolution equations in fact is 
not available in the literature at the moment. 
An exception is the paper~\cite{Olver1994} of Olver mentioned in the introduction, where the connection between 
the reduction of a partial differential equation by a generalized ansatz 
within the higher-order direct method of Galaktionov~\cite{Galaktionov1990}
and the compatibility of the associated differential constraint with this equation was discovered. 
At the same time, there exist a number of examples on the application of generalized ansatzes 
to finding exact solutions of non-evolution equations, which are collected, e.g., in~\cite{Fushchych1995d,Galaktionov&Svirshchevskii2007}.
It is obvious that all such examples can be interpreted within the framework proposed in~\cite{Olver1994}. 
Ansatzes of another kind with new unknown functions depending on different arguments 
arise under generalized separation of variables \cite{Andreev&Kaptsov&Pukhnachov&Rodionov1998,Zhdanov1994}.
Theoretical aspects of this subject should certainly be further investigated.

\subsection*{Acknowledgements}

The authors are grateful to Vyacheslav Boyko, Artur Sergyeyev, Christodoulos Sophocleous and Olena Vaneeva 
for useful discussions and interesting comments.
We are thankful to the referees for helpful suggestions that have led to improvements of the paper.
MK was supported by START-project Y237 of the Austrian Science Fund (FWF). 
The research of ROP was supported by project P20632 of FWF.

\end{document}